\title{Byzantine Consensus in Abstract MAC Layer} 
\author{Lewis Tseng}{Clark University, USA}{lewistseng@acm.org}{https://orcid.org/0000-0002-4717-4038}{}
\author{Callie Sardina}{UCSB, USA}{calliesardina@umail.ucsb.edu}{https://orcid.org/0000-0002-7498-2778}{}
\authorrunning{L. Tseng and C. Sardina} 
\keywords{Byzantine, Randomized Consensus, Approximate Consensus, Abstract MAC} 
\begin{document}

\maketitle

\begin{abstract}
This paper studies the design of Byzantine consensus algorithms in an \textit{asynchronous }single-hop network equipped with the ``abstract MAC layer'' [DISC09], which captures core properties of modern wireless MAC protocols. Newport [PODC14], Newport and Robinson [DISC18], and Tseng and Zhang [PODC22] study crash-tolerant consensus in the model. In our setting, a Byzantine faulty node may behave arbitrarily, but it cannot break the guarantees provided by the underlying abstract MAC layer. To our knowledge, we are the first to study Byzantine faults in this model. 

We harness the power of the abstract MAC layer to develop a Byzantine approximate consensus algorithm and a Byzantine randomized binary consensus algorithm. Both of our algorithms require \textit{only} the knowledge of the upper bound on the number of faulty nodes $f$, and do \textit{not} require the knowledge of the number of nodes $n$. This demonstrates the ``power'' of the abstract MAC layer, as consensus algorithms in traditional message-passing models require the knowledge of \textit{both} $n$ and $f$. Additionally, we show that it is necessary to know $f$ in order to reach consensus. Hence, from this perspective, our algorithms require the minimal knowledge. 
 
The lack of knowledge of $n$ brings the challenge of identifying a quorum explicitly, which is a common technique in traditional message-passing algorithms. A key technical novelty of our algorithms is to identify ``implicit quorums'' which have the necessary information for reaching consensus. The quorums are implicit because nodes do not know the identity of the quorums -- such notion is only used in the analysis. 
\end{abstract}

\newpage

\section{Introduction}
\label{s:intro}

We study the Byzantine consensus problems \cite{LamportSP82,DolevLPSW86,RandomizedBG_Rabin_SFCS_1983} in the ``abstract MAC layer'' model \cite{AbstractMAC_DISC2009,AbstractMAC_LEMIS_FOMC2012, AbstractMAC_probabilistic_AdHoc2014, AbstractMAC_unreliableLink_PODC2014}. The model was proposed by Kuhn, Lynch, and Newport  \cite{AbstractMAC_DISC2009} which harnesses the basic properties provided by existing wireless MAC (medium access control) protocols. The main purpose is to separate the high-level and low-level logic of algorithm design and the management of the wireless medium and participating nodes, respectively. 
Understanding the dynamics of these two levels helps one to explore the fundamental tradeoffs in algorithm design, and hopefully enables the  development and deployment of high-level algorithms onto low-level MAC protocols \cite{AbstractMAC_DISC2009,AbstractMAC_consensus_PODC2014,AbstractMAC_randomizedConsensus_DISC2018}.

The model is focused on an \textit{asynchronous} single-hop network in which nodes communicate via ``\textit{mac-broadcasts},'' the broadcast primitive provided by the abstract MAC layer. The mac-broadcast sends a message to all the fault-free nodes in the network, and the broadcaster will eventually receive an acknowledgement (ACK) upon the successful completion of the mac-broadcast. That is, upon learning the ACK, the broadcaster can be sure that all the fault-free nodes have received the message that was broadcast. The abstract MAC layer additionally provides authentication of messages between unknown processes. This primitive is stronger than the traditional point-to-point message-passing  \cite{attiya2004distributed, Lynch96} in the sense that prior works have proposed \textit{wait-free} crash-tolerant randomized consensus protocols \cite{AbstractMAC_randomizedConsensus_DISC2018,AbstractMAC_consensus_PODC2014} in the abstract MAC layer, which is typically infeasible in the point-to-point message-passing models \cite{attiya2004distributed, Lynch96}. 

To see the ``power'' of the abstract MAC layer, consider the case when node $i$ sends a message $m$ using mac-broadcast. Node $i$ does \textit{not} need to wait for explicit acknowledgement messages from other nodes. Instead, the abstract MAC layer provides the ACK which identifies the completion of the mac-broadcast. This indicates that all the other fault-free nodes have received the message $m$. In the case of traditional asynchronous point-to-point message-passing models, such guarantee is only ensured when node $i$ receives acknowledgement messages from all the other fault-free nodes, which is impossible when nodes may crash.  


Another important modeling choice is that prior works \cite{AbstractMAC_DISC2009,AbstractMAC_LEMIS_FOMC2012, AbstractMAC_probabilistic_AdHoc2014, AbstractMAC_unreliableLink_PODC2014,AbstractMAC_randomizedConsensus_DISC2018,AbstractMAC_consensus_PODC2014,Tseng_PODC22_MAC} assume little information known to the nodes to better capture the limitations of existing MAC protocols and the nature of the wireless networks. In particular, nodes do not have any a priori information about other nodes within the system. Some prior works (e.g., \cite{AbstractMAC_randomizedConsensus_DISC2018,AbstractMAC_consensus_PODC2014}) even study anonymous algorithms in which nodes do not have unique identifiers. We assume that each node has a unique identifier, but does not have prior information on the size of the system, or the identifiers of any other node at the beginning of the algorithm. 

\paragraph*{Consensus in Abstract MAC Layer} To our knowledge, there are three prior papers \cite{AbstractMAC_consensus_PODC2014,AbstractMAC_randomizedConsensus_DISC2018,Tseng_PODC22_MAC} that study fault-tolerant consensus in the abstract MAC layer. All three works consider either no failure, or assume only crash faults. In \cite{AbstractMAC_consensus_PODC2014},  Newport proves several impossibilities and identify consensus algorithms when there is no failure. Newport and Robinson 
\cite{AbstractMAC_randomizedConsensus_DISC2018} propose two algorithms that employ the abstract MAC layer to solve randomized binary consensus when nodes may crash. Tseng and Zhang \cite{Tseng_PODC22_MAC} provide another randomized consensus algorithm with improved storage complexity and expected time complexity. Approximate consensus is also studied in \cite{Tseng_PODC22_MAC}. 

We are not aware of any work on tackling Byzantine faults in the abstract MAC layer. We assume that a Byzantine faulty node can behave arbitrarily. It may also send inconsistent messages to other nodes. The only constraint of the Byzantine adversary is that it \underline{\textit{cannot}} break the underlying abstract MAC layer. First, it cannot ``nullify'' the delivery guarantee provided by the ACK at fault-free nodes. Second, even though nodes do not have information about other nodes a priori, communication is authenticated (by the MAC layer), and the receiver can verify the identity of the message sender, once it receives the message.  Consequently, a Byzantine adversary cannot fake its identity. From a more practical perspective, we are focused on the Byzantine faults at the application layer. For example, the Byzantine adversary under our consideration cannot use jamming or sybil attacks to undermine the abstract MAC layer.

The Fischer-Lynch-Paterson (FLP) impossibility result \cite{Impossibility_Fischer_ACM_1985} proves that it is impossible to design a deterministic exact consensus algorithm when nodes may fail in asynchronous  message-passing systems. The result can be extended to the abstract MAC layer model \cite{AbstractMAC_consensus_PODC2014}. Therefore, we focus on approximate consensus \cite{DolevLPSW86} and randomized binary consensus \cite{RandomizedBG_Rabin_SFCS_1983} problems. In the first problem, the agreement property is relaxed so that the outputs at fault-free nodes only need to be roughly equal, whereas in the second problem, the termination only holds in a probabilistic sense. 

\paragraph*{Our Contributions}

Consider an asynchronous system consisting of $n$ nodes and up to $f$ Byzantine faulty nodes. We propose the two following Byzantine-tolerant consensus algorithms: 

\begin{itemize}
    \item \textit{Approximate Consensus}: We present MAC-BAC, which is correct given $n \geq 5f+2$. Similar to prior approximate algorithms \cite{DolevLPSW86,AbrahamAD04}, nodes proceed in rounds and maintain a state value which is updated every round and eventually will become the output. MAC-BAC achieves convergence rate $3/4$. More concretely, after every two rounds, the range of the state values at fault-free nodes is reduced by at least $1/4$. 

    \item \textit{Randomized Consensus}: We present MAC-RBC, which is correct given $n \geq 5f+1$. The expected time complexity is constant. The algorithm assumes the existence of a common coin \cite{RandomizedBG_Rabin_SFCS_1983} among all fault-free nodes.
\end{itemize} 
Our algorithms require \textit{only} the knowledge of $f,$ and do \textit{not} need to know $n$. We also prove that the knowledge of $f$ is necessary for solving Byzantine consensus in Appendix \ref{app:impossible}. 

Our model is weaker than the synchronous point-to-point message-passing model \cite{LamportSP82}; hence, the lower bound on resilience $3f+1$ still applies. Moreover, a node can use mac-broadcast to simulate a point-to-point communication if $n$ is known. Therefore, prior algorithms \cite{AbrahamAD04,RandomizedBG_Rabin_SFCS_1983,Raynal_PODC14_optimalAsyncConsensus} with optimal resilience $3f+1$ can be simulated in our model with the knowledge of $n$. The lower bound on resilience when $n$ is unknown is left as an interesting future work.

The lack of knowledge of $n$ makes it impossible to identify a quorum explicitly, which is a common technique in consensus algorithms in the traditional message-passing models, e.g.,  \cite{attiya2004distributed,Lynch96,LamportSP82,DolevLPSW86,RandomizedBG_Rabin_SFCS_1983,Raynal_PODC14_optimalAsyncConsensus}. A key novelty of our algorithms is to identify ``implicit quorum.'' More precisely, our technique ensures that there exists a quorum whose information will eventually be propagated to other fault-free nodes; however, nodes do not know the explicit identifiers of the nodes inside the quorum. 

One challenge of such an implicit quorum is that the analysis becomes more complicated, as we first need to ensure that an implicit quorum exists and then we need to argue that other nodes will be able to learn the necessary information from the implicit quorum (which is not always obvious due to asynchrony and Byzantine faults).

\section{Related Work}

We discuss the most relevant consensus algorithms and the works on the abstract MAC layer. Modeling wireless networks with the abstract MAC layer was first introduced by Kuhn, Lynch, Newport in \cite{AbstractMAC_DISC2009}, in which  they present algorithms for multi-message broadcasts in a multi-hop network when there is no failure. Non-fault-tolerant leader election and maximal independent set problems are later studied in the model \cite{AbstractMAC_LEMIS_FOMC2012, AbstractMAC_probabilistic_AdHoc2014, AbstractMAC_unreliableLink_PODC2014}. 

The three prior works  \cite{AbstractMAC_consensus_PODC2014,AbstractMAC_randomizedConsensus_DISC2018,Tseng_PODC22_MAC} that study fault-tolerance in the abstract MAC layer all focus on crash faults. The techniques are different from our work, because Byzantine adversary can send inconsistent messages. For example, a technique of ``counter racing'' (for identifying when to output a value safely) is used in \cite{AbstractMAC_randomizedConsensus_DISC2018} and a technique of ``jumping'' to a state proposed by another node is proposed in \cite{Tseng_PODC22_MAC}. These techniques do not work if a Byzantine node lies about its observations or state value. 

The problem of Byzantine consensus in message-passing has been extensively studied in the literature since the seminal work by Lamport, Shostak and Pease \cite{LamportSP82}. Dolev et al. \cite{DolevLPSW86} propose an iterative approximate Byzantine consensus algorithm that is correct given $n \geq 5f+1$. Our algorithm MAC-BAC is inspired by their algorithm and requires $n \geq 5f+2$. Mostefaoui et al.  \cite{Raynal_PODC14_optimalAsyncConsensus} propose a Byzantine randomized binary consensus with optimal resilience $n \geq 3f+1$ and achieve expected constant time complexity. Our MAC-RBC algorithm is inspired by their algorithm, but requires $n \geq 5f+1$. Both algorithms use ``common coin'' \cite{RandomizedBG_Rabin_SFCS_1983}, which guarantees that every node receives the same sequence of random bits. Unlike prior algorithms, MAC-BAC and MAC-RBC does not use the information regarding $n$ or the notion of ``explicit quorum,'' so our design and analysis are more complicated than the ones in   \cite{DolevLPSW86,Raynal_PODC14_optimalAsyncConsensus}. 

Abrahm et al. \cite{AbrahamAD04} present an approximate consensus algorithm with optimal resilience. Their algorithm relies on the reliable broadcast primitive and the witness technique. Many works \cite{BAC_Bracha_1987, AuthenticatedFT_Srikanth_1987, Toueg_PODC_1984} describe Byzantine randomized binary consensus algorithms with various guarantees. These algorithms all achieve optimal resilience $n \geq 3f + 1$. There are also recent works on Byzantine randomized  consensus that require more powerful primitives such as PKI \cite{simpleRBC_Crain_2020,noPrivateSetupBC_Gaoz_IEEE_2022,subquadraticCommunicationBC_Blum_2020,concurrentAysnchBC_Cohen_2023,asymmetricBC_Cachin_ESORICS_2021}. All these algorithms require the knowledge of $n$ and rely on the usage of explicit quorum and some variation of reliable broadcasts \cite{bracha1987asynchronous}. 

Without knowing $n$, it is difficult to identify quorums explicitly so that there is an intersection between any two quorums. For example, in many prior works that use reliable broadcast (e.g.,  \cite{AbrahamAD04,bracha1987asynchronous}), a quorum of size $n-f$ is used, which ensures that a Byzantine node cannot equivocate. However, when $n$ is unknown, it is unclear whether such property can be guaranteed, forcing us to develop new techniques. In fact, the lower bound on the resilience of Byzantine consensus problems in the abstract MAC layer is still an open problem. 

There is also a line of works aiming to reach consensus in synchronous systems with unknown participants. The problem is named CUPs (Consensus with Unknown Participants). Similar to our model, the CUPs problem assumes no knowledge of $n$. It was first studied by Cavin et al. \cite{CUP_AdHoc-Now2004} when nodes do not crash. Greve and Tixeuil \cite{Greve_DSN_2007} study the tradeoffs between synchrony and the shared knowledge between nodes in a multi-hop network. Later, Alcheriri et al. \cite{ByzantineCUP_Alcheriri_OPODIS_2008} and Khanchandani and Wattenhofer \cite{Wattenhofer_ByzCUPs_PODC20} consider the Byzantine consensus in CUPs. These work assume synchrony; hence, are very different from our model.  

\section{Preliminaries}

\subsection{System Model}

Our system model consists of a static system with $n$ nodes, with up to $f$ nodes which may be Byzantine faulty. The set of nodes is denoted as the set of their unique identifiers, i.e., $\{1, \dots, n\}$. However, the knowledge of $n$ is only used in analysis. In our algorithms, nodes do not know $n$. Moreover, due to asynchrony and faults, it is impossible to learn $n$ exactly. 

Byzantine nodes may send arbitrary messages to other nodes, or act as crashed nodes. The messages which Byzantine nodes send to all other nodes need not be consistent. We assume that the behavior of the Byzantine nodes is controlled by a malicious adversary with access to the system state throughout the algorithm. Nodes which are not Byzantine are called fault-free nodes. Fault-free nodes follow the algorithm protocol. Our algorithm MAC-BAC assumes $n \geq 5f + 2$, and MAC-RBC assumes $n \geq 5f + 1$. 

Our algorithm operates on top of a single-hop network equipped with the abstract MAC layer \cite{AbstractMAC_kuhn_lynch_newport_2011}. The model provides a communication primitive ``mac-broadcast,'' which ensures an  eventual delivery guarantee. More specifically, at some point after a node $i$ has broadcast a message via ``mac-broadcast,'' node $i$ will receive an acknowledgment (ACK) which indicates that all other fault-free nodes within the system have received $i$’s message. No other information is contained within the ACK, e.g., the ACK relays no information concerning the number of other nodes within the system. As discussed in Section \ref{s:intro}, we consider Byzantine faults in the application layer; hence, the guarantees of the underlying abstract MAC layer cannot be disrupted by the Byzantine adversary.

\subsection{Approximate and Randomized Consensus}

A correct approximate consensus algorithm \cite{DolevLPSW86} must satisfy the following conditions: 

\begin{itemize}

    \item \textit{Termination}: Every fault-free node must output a value in a finite amount of time.

    \item \textit{Validity}: the output must remain in the convex hull of the inputs of the fault-free nodes. 

    \item \textit{$\epsilon$-Agreement}: For any $\epsilon > 0$, the output of all fault-free nodes are within $\epsilon$ of each other.
\end{itemize}

A correct randomized binary consensus  algorithm \cite{RandomizedBG_Rabin_SFCS_1983,Ben_Or_PODC_1983} must satisfy the following conditions when the input is a binary value (either $0$ or $1$): 

\begin{itemize}

    \item \textit{BC-Termination}: Every fault-free node outputs a value with probability 1.

    \item \textit{BC-Validity}: Every output value was proposed by a fault-free node.

    \item \textit{BC-Agreement}: The output of all fault-free nodes are identical.

\end{itemize}
\section{Byzantine Approximate Consensus: MAC-BAC}

This section presents our algorithm MAC-BAC, which is a correct Byzantine approximate consensus given $n \geq 5f+2$. It follows the structure of the algorithm by Dolev et al.  \cite{DolevLPSW86}. In both algorithms, node $i$ proceeds in rounds and keeps a state value $v_i$ that eventually becomes the output, after a sufficient number of rounds. The key difference between the two algorithms is that in MAC-BAC, node $i$ waits until it receives at least $4f+2$ messages from the same round (instead of $n-f$ in \cite{DolevLPSW86}). By assumption, node $i$ is able to transmit a message to itself using mac-broadcast. 

Recall that in our model, we assume nodes do not have the knowledge of $n$. Consequently, we do not have the notion of explicit quorum. (In \cite{DolevLPSW86}, the $n-f$ nodes from which a node $i$ received a message act as a quorum.) Therefore, our analysis is more complicated in the sense that we need to identify how important information is propagated throughout the rounds, via the help of ``implicit quorum.''

\subsection{MAC-BAC}

MAC-BAC is presented in Algorithm \ref{alg:MAC-BAC2}. The first step of the algorithm is to broadcast its identifier, its current value and round index using mac-broadcast. Once this mac-broadcast has completed, an ACK will be received from the abstract MAC layer acknowledging that the message has been received by all the fault-free nodes. 

Each node $i$ then waits to receive at least $4f+2$ messages from round $p_i$. Upon receiving these messages, node $i$ discards extreme values and update its new state value. We introduce two notations to facilitate the presentation:

\begin{itemize}
    \item $\min^{f+1}\{R_i[p_i]\}$ denotes the $(f+1)$-st minimum value in $R_i[p_i]$;\footnote{Alternatively, the smallest value after discarding $f$ smallest values in $R_i[p_i]$.} and 

    \item $\max^{f+1}\{R_i[p_i]\}$ denotes $(f+1)$-st maximum value in $R_i[p_i]$.\footnote{Alternatively, the largest value after discarding $f$ largest values in $R_i[p_i]$.}
\end{itemize}

Our strategy of updating the state value is as follows: at line 5, $l$ takes the $(f+1)$-st minimum value in $R_i[p_i]$. At line 6, $u$ takes the $(f+1)$-st maximum value in $R_i[p_i]$. The new state value at node $i$ is then updated to be the average of $l$ and $u$, at line 7. This is also the strategy used in \cite{AbrahamAD04,DolevDS87}.

Node $i$ then proceeds to the next round. Once node $i$ reaches the final round, $p_{end}$, it outputs the final state value, $v_i[p_{end} + 1]$.

\begin{algorithm}[t]
\caption{MAC-BAC: Steps at each node $i$}
\label{alg:MAC-BAC2}
\begin{algorithmic}[1]
\footnotesize
\item[{\bf Local Variables:}]
    
    \item[] $p_i$ \COMMENT{round index, initialized to $0$}
    
    \item[] $v_i$ \COMMENT{state, initialized to $x_i$, the input at node $i$}

    \hrulefill
    \begin{multicols}{2}    

    \FOR{$p_i \leftarrow 0$ to $p_{end}$}
        \STATE \textbf{mac-broadcast}$(i, v_i, p_i)$
        \STATE \textbf{wait until} node $i$ has received \\\hspace{18pt} $\geq 4f+2$ messages from round $p_i$
        \STATE $R_i[p_i] \leftarrow$ received round-$p_i$ messages
        \STATE $l = \min^{f+1}\{R_i[p_i]\}$
        \STATE $u = \max^{f+1}\{R_i[p_i]\}$
        \STATE $v_i[p_i+1] \leftarrow \frac{l + u}{2}$
        \STATE $p_i \leftarrow p_i + 1$
    \ENDFOR
    \STATE \textbf{output} $v_i[p_{end}+1]$

\end{multicols}  
\end{algorithmic}
\end{algorithm}

\subsection{Correctness Proof}

Termination is obvious, as  $p_{end}$ is a fixed value defined in Eq. (\ref{eq:p-end}). Moreover, since there are at least $5f+2$ nodes, each node is able to receive enough messages at Line 3. We present the proof in Appendix \ref{app:mac-bac-termination}. Validity also follows from the strategy of  discarding extreme values. Essentially, both $l$ and $u$ are guaranteed to be in the convex hull of the state values ($v_i$'s) at fault-free nodes from the previous round. The proof is presented in Appendix \ref{app:mac-bac-validity}. 

A key novelty is the way we prove $\epsilon$-agreement. In prior works  \cite{AbrahamAD04,DolevLPSW86}, the range of state values  at fault-free nodes shrinks every round, whereas in our proof, the range shrinks every \textit{two} rounds.  Moreover, in prior algorithms, any pair of two fault-free nodes must use at least one identical value to update their new state values, due to the usage of an explicit quorum. However, such a condition might not hold for MAC-BAC. This is because $n$ is unknown, and nodes might receive messages from two groups of nodes such that the intersection of the two group is less than $f$ nodes. In this case, there is no guarantee that nodes will use common value(s) to update the state values. In fact, in our algorithm, some nodes might use completely different values for updating (i.e., after discarding the common values) in the same round.

\subsubsection{Proof of $\epsilon$-Agreement and Implicit Quorum in MAC-BAC}

\paragraph*{Useful Notions}

We first introduce some terminology to facilitate the proof. 

\begin{definition}[First and Second Mover]
\label{def:first-and-second-movers}
    For each round $r$, the set of first movers is defined as the \underline{first $2f+1$ fault-free} nodes that complete their respective mac-broadcasts (at Line 2) in round $r$.\footnote{We can break ties using IDs without affecting the correctness.} All the other fault-free nodes are called second movers.
\end{definition}

In our analysis, we are interested in how first and second movers propagate and update their values. Therefore, we introduce two sets $F_r$ and $S_r$ below. 

\begin{definition}
    Let $F_r$ be the set of state values of the first movers at \underline{the end of round $r$} -- the $v_i$ \textbf{after} a first mover $i$ updates its state value at Line 7 in round $r$. 
    Let $S_r$ be the set of state values of the second movers at \underline{the end of round $r$} -- the $v_j$ \textbf{after} a second mover $j$ updates its state value at Line 7 in round $r$. 
\end{definition}

\begin{observation}[Sequential Order]
Without loss of generality, we can assume nodes complete Line 2 following a sequential order for each round. 
\end{observation}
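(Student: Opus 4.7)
The observation is a standard asynchrony-rewriting lemma, and the plan is to establish it by showing that any execution of MAC-BAC can be transformed into a behaviorally indistinguishable execution whose mac-broadcast completions at fault-free nodes are totally ordered within each round. First I would observe that within a round $r$, the set of relevant completion events consists of at most $n$ ACK receipts, one per fault-free node. If two such events are ``simultaneous'' in a given execution $E$, the abstract MAC layer places no constraint on which ACK must be delivered first; the only guarantee is eventual delivery of the ACK after every fault-free node has received the broadcast. Consequently, for any fixed tie-breaking rule (e.g., by node ID), the adversarial scheduler is free to present the ACKs in that chosen sequential order, producing a modified schedule $E'$.

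Next I would argue that $E$ and $E'$ are locally indistinguishable at every fault-free node. Each node's local computation depends only on the events it directly observes: the messages it receives at Line~3 and the ACK it receives for its own mac-broadcast at Line~2. Since the transformation only perturbs the relative timing of events that were already concurrent in $E$, no fault-free node observes a different sequence of local events under $E'$, and thus no node takes a different action. In particular, every node performs the same updates at Lines~5--7 in both executions, so any correctness property that holds in $E'$ also holds in $E$. Because MAC-BAC is analyzed over the worst case across all asynchronous schedules, restricting attention to sequentially-ordered schedules is without loss of generality.

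The main (and essentially only) obstacle is verifying that the reordered schedule still respects the causal constraints imposed by the abstract MAC layer, namely that an ACK cannot precede its own mac-broadcast, and that an ACK can only be delivered after every fault-free node has received the corresponding message. Both constraints are preserved automatically because the tie-breaking only reorders events that were already concurrent in $E$, and therefore no causal edge between a broadcast, a delivery, and the resulting ACK is violated. This justifies treating the first-mover / second-mover partition in Definition~\ref{def:first-and-second-movers} as well-defined throughout the rest of the analysis.
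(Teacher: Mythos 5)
Your argument is correct, and it is the classical serialization argument for ``WLOG'' claims about asynchronous schedules: any execution can be transformed, by serializing concurrent ACK-delivery events according to a fixed tie-breaking rule, into an execution that is indistinguishable at every fault-free node, so proving correctness over sequentially-ordered schedules suffices. The paper justifies the observation differently (and much more tersely, in a footnote attached to the subsequent ID-relabeling step): it argues from the structure of the algorithm rather than from the structure of the execution model, namely that each node only processes round-$p$ messages while in round $p$ and updates its state exactly once per round, so imposing an order on Line~2 completions and relabeling IDs accordingly ``does not alter the values sent to/from nodes'' and cannot cause values to be attributed to the wrong round. Your route is the more rigorous and more general one --- it makes explicit the indistinguishability of the perturbed schedule and checks that causal constraints of the MAC layer (ACK after broadcast, ACK after all deliveries) are preserved, which the paper leaves implicit --- whereas the paper's route is tailored to the round-isolated structure of MAC-BAC and is really aimed at justifying the notational relabeling rather than the existence of the total order itself. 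The one small thing you leave implicit is the per-round ID relabeling that the paper performs immediately after the observation (the order of completions can differ from round to round, so the relabeling is round-dependent), but since that is purely notational and you do note that the first-mover/second-mover partition becomes well-defined, nothing essential is missing.
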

For brevity of the presentation, we relabel the IDs so that node $j$ completes before node $i$.\footnote{Assuming that nodes complete Line 2 in a sequential order within each round does not affect the correctness because nodes only process messages received from nodes at the same round, and the state of a node changes only once within a round. When the state updates at Line 7, the round number increments as well on Line 8 (only round p messages are processed in round p, so any change of states is the value for the subsequent round). Ordering does not alter the values sent to/ from nodes, nor does it allow for values to be considered from the incorrect round. Therefore, correctness is not violated, after the ID relabeling.} 

We know by sequential ordering that if $j < i$, then $j$ completes its mac-broadcast before $i$. This means that node $j$ must have received its ACK from the mac-broadcast before node $i$ completes Line 2. Therefore, in order to move to the next round $r+1$, $i$ must receive node $j$'s round-$r$ state, i.e., $v_j[r]$ that is assigned at Line 5. Wth a slight abuse of terminology, let node's round-0 state be the input for that node. 

\begin{observation}
\label{obs2}
Following the sequential order and the property of the mac-broadcast, we know that for each round $r+1$, node $i$ must receive node $j$'s round-$r$ state if $j < i$ for all fault-free $i$ and $j$. 
\end{observation}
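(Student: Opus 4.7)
The plan is to deduce Observation 2 directly from the two ingredients named in its statement: the sequential ordering convention of the previous observation, and the delivery guarantee of the abstract MAC layer. Fix any round $r$ and any two fault-free nodes with $j < i$ under the relabeling. What I want to establish is that by the time $i$ finishes Line~4 for round $r$, the set $R_i[r]$ already contains node $j$'s round-$r$ message $(j, v_j[r], r)$, so that $j$'s round-$r$ state is among the values $i$ uses to compute $v_i[r+1]$.

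The first step is to record what the sequential order gives us: $j < i$ means that $j$ completes its Line~2 mac-broadcast in round $r$ \emph{before} $i$ completes its Line~2 mac-broadcast in round $r$. In particular, $j$ has received an ACK for $(j, v_j[r], r)$ at a moment that is still strictly before $i$ has reached the end of Line~2 (and therefore before $i$ has begun waiting at Line~3 or snapshotted $R_i[r]$ at Line~4). The second step invokes the defining property of the abstract MAC layer: receipt of an ACK for a mac-broadcast implies that \emph{every} fault-free node has already delivered the broadcast message. Since $i$ is fault-free, it has delivered $(j, v_j[r], r)$ at or before the instant $j$'s ACK fires.

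Combining these two steps, at the moment $i$ later executes Line~3--Line~4, node $j$'s round-$r$ message has already arrived at $i$, and since Line~4 collects \emph{all} round-$r$ messages received so far, $j$'s state belongs to $R_i[r]$. Consequently, when $i$ computes $l$, $u$, and $v_i[r+1]$ at Lines~5--7, the value $v_j[r]$ is part of the multiset under consideration. Applying this argument to every pair $j < i$ of fault-free nodes yields the observation.

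The argument involves no real obstacle; the only subtle point to state carefully is the \emph{temporal} chain---ACK-to-$j$ precedes completion-of-Line~2-by-$i$, which precedes completion-of-Line~3-by-$i$, which precedes the snapshot at Line~4---so that the MAC layer's delivery guarantee is invoked at a moment strictly before $i$ freezes $R_i[r]$. Once that temporal bookkeeping is spelled out, the observation is immediate from the abstract MAC layer semantics already stated in Section~3.1, and no case analysis on Byzantine behavior is required because the statement quantifies only over fault-free $i$ and $j$.
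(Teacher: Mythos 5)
Your argument is correct and matches the paper's own justification, which appears in the prose immediately preceding the observation: $j<i$ means $j$ received its ACK before $i$ completed Line~2, the ACK certifies delivery to every fault-free node, and hence $v_j[r]$ is already at $i$ before $i$ forms $R_i[r]$. Your write-up just makes the temporal chain more explicit; no substantive difference.
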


Note that by definition, if $j$ is fault-free, then it is either a first or second mover. Additionally, this observation does not indicate the relationship between $R_i[r+1]$ and $R_j[r+1]$. In particular, it is possible that $v_k[r] \in R_i[r+1]$ and $v_k[r] \notin R_j[r+1]$. This is possible if $i, j < k$ or $j < k < i$.

Observation \ref{obs2} and the guarantees of the abstract MAC layer together imply that the state values broadcast by the first movers are received by all the second movers. 

\paragraph*{Implicit Quorum in MAC-BAC}

In our analysis, first movers are the ``implicit quorum'' for second movers in round $r$, due to Observation \ref{obs2}. This is because even though second movers do not know the identities of the first movers, second movers will share the same information from the first movers and use some of the state values at first movers to update their new state values.

Interestingly, first movers may not have enough shared information in round $r$ \textit{within themselves}. This is possible if they receive many messages from non-overlapping sets of second movers at Line 3. They are only guaranteed to receive common information from their ``implicit quorum'' in the next round. More concretely, first movers of round $r+1$ are guaranteed to receive enough information (for convergence) from the second movers of round $r$. This is because node $i$ waits for $4f+2$ messages. Among them, $2f+1$ could be from first movers of round $r$, $f$ could be from Byzantine nodes, and the remaining $f+1$ could be from second movers of round $r$. This turns out is enough for first movers of round $r+1$ to converge. The proof of Lemma \ref{lemma:interval-length} presents this intuition in more detail. 

\paragraph*{Proof of $\epsilon$-Agreement}

Without loss of generality, we can scale the inputs to $[0,1]$ as long as we scale $\epsilon$ down by the same factor. For simplicity of the presentation, we assume that for each fault-free node $i$, its input $x_i\in [0,1]$.

We first prove the following lemma. 
The lemma below follows from the fact that each node discards extreme values from Byzantine nodes. The proof is presented in Appendix \ref{app:range}.

\begin{lemma}
\label{lemma:f1}
    Fix a round $r \geq 1$. Assume the range of state values at fault-free nodes is $[x, y]$, where $0 \leq x, y \leq 1$, i.e., $F_r \cup S_r = [x, y]$. Then, we have $F_{r+1} \subseteq F_r \cup S_r = [x, y]$.
\end{lemma}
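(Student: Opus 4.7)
The plan is to fix an arbitrary first mover $i$ of round $r+1$ and show that the new state $v_i[r+2]$ it assigns at Line 7 of round $r+1$ lies in $[x,y]$. Since $v_i[r+2]=(l+u)/2$, where $l=\min^{f+1}\{R_i[r+1]\}$ and $u=\max^{f+1}\{R_i[r+1]\}$, it suffices to show $x \le l$ and $u \le y$, after which convexity of $[x,y]$ closes the argument.

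First I would unpack the contents of $R_i[r+1]$. By Line 3, $|R_i[r+1]|\ge 4f+2$. At most $f$ of those messages originate from Byzantine nodes, so at least $3f+2$ originate from fault-free senders. I would then observe that a round-$(r+1)$ message from a fault-free node $k$ reports the value broadcast at Line 2 of $k$'s round-$(r+1)$ iteration, which is $k$'s current state at the start of that iteration --- i.e., the value $v_k[r+1]$ assigned at Line 7 of $k$'s round-$r$ iteration. Any fault-free $k$ is either a first or a second mover of round $r$, so $v_k[r+1]\in F_r\cup S_r\subseteq [x,y]$ by hypothesis. Hence at least $3f+2$ of the entries in $R_i[r+1]$ lie in $[x,y]$.

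The trimming step is then routine. At most $f$ entries of $R_i[r+1]$ can be strictly less than $x$ (only the Byzantine ones), so after discarding the $f$ smallest values from $R_i[r+1]$ the remaining minimum $l$ satisfies $l\ge x$. Symmetrically, at most $f$ entries can exceed $y$, so after discarding the $f$ largest we get $u\le y$. Therefore $l,u\in[x,y]$, and $v_i[r+2]=(l+u)/2\in[x,y]$. Since $i$ was an arbitrary first mover of round $r+1$, every value in $F_{r+1}$ lies in $[x,y]=F_r\cup S_r$.

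I do not expect a genuine obstacle here --- this is a standard trimmed-mean containment argument powered by the assumption $|R_i[r+1]|\ge 4f+2$ together with the bound of $f$ Byzantine values. The one place where care is warranted is the timing/notation: I must be precise that a round-$(r+1)$ message broadcast by a fault-free sender carries the sender's end-of-round-$r$ state (i.e., an element of $F_r\cup S_r$), not its value after its Line 7 update in round $r+1$. Once this identification is made, the discarding inequalities and the convex combination $v_i[r+2]=(l+u)/2$ immediately yield $F_{r+1}\subseteq[x,y]$.
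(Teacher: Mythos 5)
Your proof is correct and follows the same idea as the paper's (very terse) appendix argument: since $|R_i[r+1]|\ge 4f+2$ and at most $f$ entries are Byzantine, discarding the $f$ smallest and $f$ largest values forces $l,u\in[x,y]$, so the average stays in $[x,y]$. Your write-up just spells out the details (in particular that fault-free round-$(r+1)$ messages carry end-of-round-$r$ states) that the paper leaves implicit.
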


Using the same argument, we can also show that $F_1 \subseteq [0, 1],$ the range of input $x_i$.

Before proving how the state values at second movers evolve, we first introduce two notations for a round $r \geq 1$: 

    \begin{itemize}
        \item Let $m_{r}$ be the minimum fault-free state value at the end of round $r$, $m_r = \min\{F_r \cup S_r\}$.
        \item Let $M_{r}$ be the maximum fault-free state value at the end of round $r$, $M_r = \max\{F_{r} \cup S_{r}\}$.
    \end{itemize}

It follows that the interval length of $F_r \cup S_r$ is $M_r - m_r$. 

Then, we prove the following lemma. The lemma is where we utilize the ``implicit quorum'' for second movers. By the property of mac-broadcast, every second mover must receive the state values from all the first movers (Observation \ref{obs2}); hence, we can use this observation to show that the interval length of $S_{r+1}$ must shrink by at least half. This particular proof is similar to the ones in traditional message-passing networks \cite{DolevLPSW86,AbrahamAD04}. 

\begin{lemma}
\label{lemma:1/2}
    Fix round $r\geq 1$. The interval length of $S_{r+1}$ is at most half of the interval length of $F_r \cup S_r$.
\end{lemma}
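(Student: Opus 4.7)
The plan is to exploit an ``implicit quorum'' of first movers of round $r+1$, whose round-$(r+1)$ states every second mover of round $r+1$ is forced to receive before advancing. Let $c_1 \leq c_2 \leq \dots \leq c_{2f+1}$ denote the sorted round-$(r+1)$ states of the first movers of round $r+1$. Because each such first mover is fault-free, each $c_k$ belongs to $F_r \cup S_r$ and therefore lies in $[m_r, M_r]$. By Observation \ref{obs2} and the mac-broadcast delivery guarantee, the multiset $C = \{c_1, \dots, c_{2f+1}\}$ is contained in $R_j[r+1]$ for every second mover $j$ of round $r+1$.

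Next I would argue that for each such $j$, the quantities $l_j$ and $u_j$ computed at Lines 5--6 satisfy $m_r \leq l_j \leq c_{f+1} \leq u_j \leq M_r$. The bound $l_j \leq c_{f+1}$ follows because $C$ alone contributes the $f+1$ entries $c_1, \dots, c_{f+1}$ to $R_j[r+1]$, all of which are $\leq c_{f+1}$, so the $(f+1)$-st smallest element of $R_j[r+1]$ can be no larger than $c_{f+1}$; a symmetric argument using $c_{f+1}, \dots, c_{2f+1}$ gives $u_j \geq c_{f+1}$. The outer bounds $l_j \geq m_r$ and $u_j \leq M_r$ come from the same extreme-value-discarding argument already used in Lemma \ref{lemma:f1}: at most $f$ of the $f+1$ smallest entries of $R_j[r+1]$ can be Byzantine, so some fault-free value sits at or below $l_j$, and every fault-free value is at least $m_r$ (and symmetrically for $u_j$).

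Putting these bounds together, $l_j \in [m_r, c_{f+1}]$ and $u_j \in [c_{f+1}, M_r]$, so
\[
v_j[r+2] \;=\; \tfrac{l_j + u_j}{2} \;\in\; \left[\tfrac{m_r + c_{f+1}}{2},\; \tfrac{c_{f+1} + M_r}{2}\right],
\]
an interval of length $(M_r - m_r)/2$. Since this containment holds uniformly for every $j$ contributing a value to $S_{r+1}$, the interval length of $S_{r+1}$ is at most half the interval length of $F_r \cup S_r$, which is exactly the claim.

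The main obstacle is the sandwich $l_j \leq c_{f+1} \leq u_j$, which is where the lack of knowledge of $n$ bites. A traditional Dolev-style proof would use an explicit $(n-f)$-size quorum with a guaranteed intersection of $f+1$ fault-free nodes between any two second movers; no such explicit set is available here. The saving grace is that the $2f+1$ first movers of round $r+1$ -- which no second mover can avoid hearing from -- play exactly this role, and the equality $|C| = 2f+1$ is precisely what lets the $(f+1)$-st-rank operator pin $l_j$ and $u_j$ to opposite sides of the median $c_{f+1}$ despite up to $f$ Byzantine outliers in $R_j[r+1]$.
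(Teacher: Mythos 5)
Your proof is correct and follows essentially the same route as the paper's: your $c_{f+1}$ is exactly the median $x_{r+1}$ of the values broadcast by the round-$(r+1)$ first movers, and the sandwich $m_r \leq l_j \leq c_{f+1} \leq u_j \leq M_r$ is the paper's key step, just stated more explicitly. The only cosmetic quibble is the label ``round-$(r+1)$ states'' for what are really the end-of-round-$r$ states that first movers broadcast at Line 2 of round $r+1$, but your use of them makes the intended meaning unambiguous.
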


\begin{proof}

    Let $x_{r+1}$ be the median of the state values at first movers in \underline{round $r+1$.} 
    By definition, we have $m_{r} \leq x_{r+1}$ and $M_{r} \geq x_{r+1}$.

    Now, consider any two second movers $i$ and $j$. Without loss of generalization, assume $v_i[r+2] \geq v_j[r+2]$. Recall that these values are produced at the end of round $r+1$ at Line 7.

    Since $i$ discards extreme values, $u \leq M_{r}$ and $l \leq x_{r+1}$ at node $i$. Similarly, $x_{r+1} \leq u$ and $m_{r} \leq l$ at node $j$. Therefore, we have in round $r+1$,

    \[
    v_i[r+2] = \frac{l+u}{2} \leq \frac{x_{r+1} + M_{r}}{2}
    \]

    and 

    \[
    v_j[r+2] = \frac{l+u}{2} \geq \frac{m_{r}+x_{r+1}}{2}
    \]

    Consequently, we have

    \[
    v_i[r+2] - v_j[r+2] \leq \frac{M_{r} + x_{r+1}}{2} - \frac{x_{r+1}+m_{r}}{2}
    = \frac{M_{r}-m_{r}}{2}
    \]

    Since the inequality applies to any pair of second movers $i$ and $j$, the interval length of $S_{r+1}$ is at most half of the interval length of $F_r \cup S_r$. (Note that the interval length of $F_r \cup S_r$ is simply $M_r - m_r$.)

\end{proof}

The proof can be easily applied to the case of $S_1$. That is, the interval length of $S_1$ is at most half of the interval length of the inputs, $[0, 1]$.

We then prove the following key lemma. This proof is where we use the notion of implicit quorum for first movers. In particular, first movers in round $r+1$ rely on second movers in round $r$ to learn the information that is essential for convergence. 

\begin{lemma}
\label{lemma:interval-length}
    The interval length of $F_{r+2} \cup S_{r+2}$ is at most $\frac{3}{4}(M_r-m_r)$. 
\end{lemma}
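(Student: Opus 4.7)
The plan is to leverage Lemma \ref{lemma:1/2} to show that by the end of round $r+1$ the second-mover values $S_{r+1}$ already lie in a narrow band of width at most $(M_r-m_r)/2$, and then invoke an implicit-quorum counting argument in round $r+2$ to force every fault-free node---whether first or second mover of round $r+2$---to fold this narrow band into its new state. This contracts the overall range from $M_r-m_r$ to $\frac{3}{4}(M_r-m_r)$. Concretely, let $a=\min S_{r+1}$ and $b=\max S_{r+1}$, so that $b-a\le (M_r-m_r)/2$ by Lemma \ref{lemma:1/2}. The same extreme-discarding reasoning used in Lemma \ref{lemma:f1} applies verbatim to second movers (the update rule at Line 7 is identical in both cases), so $F_{r+1}\cup S_{r+1}\subseteq[m_r,M_r]$; in particular, every fault-free round-$(r+2)$ message carries a value in $[m_r,M_r]$, and those from second movers of round $r+1$ carry values in $[a,b]$.

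Now consider an arbitrary fault-free node $i$ executing round $r+2$. It waits for $4f+2$ round-$(r+2)$ messages; at most $f$ are Byzantine, so at least $3f+2$ come from fault-free senders. Every fault-free sender is either a first mover or a second mover of round $r+1$, and since there are exactly $2f+1$ first movers of round $r+1$, a pigeonhole count forces at least $(3f+2)-(2f+1)=f+1$ of $i$'s received messages to come from second movers of round $r+1$---these constitute the \emph{implicit quorum} for $i$ in round $r+2$---and they carry values in $[a,b]$. The presence of $\ge f+1$ values in $[a,b]$ implies that the $(f+1)$-st minimum $l$ satisfies $l\le b$ and the $(f+1)$-st maximum $u$ satisfies $u\ge a$. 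Simultaneously, a standard order-statistic argument (at most $f$ received values are Byzantine and all fault-free ones lie in $[m_r,M_r]$) yields $l\ge m_r$ and $u\le M_r$.

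Combining these bounds gives $l\in[m_r,b]$ and $u\in[a,M_r]$, and hence
\[
v_i[r+3]=\frac{l+u}{2}\in \left[\frac{m_r+a}{2},\ \frac{b+M_r}{2}\right],
\]
an interval of length $\frac{b-a}{2}+\frac{M_r-m_r}{2}\le \frac{1}{4}(M_r-m_r)+\frac{1}{2}(M_r-m_r)=\frac{3}{4}(M_r-m_r)$. Since this containment holds uniformly for every fault-free $i$ in round $r+2$, the set $F_{r+2}\cup S_{r+2}$ lies in an interval of length at most $\frac{3}{4}(M_r-m_r)$, as claimed. The main obstacle is precisely the implicit-quorum step: without the pigeonhole forcing $\ge f+1$ of $i$'s received messages to originate from the \emph{slow} second movers of round $r+1$, a first mover of round $r+2$ could in principle receive only values from $F_{r+1}$, which can span the full interval $[m_r,M_r]$, and we would recover only the $1/2$-factor contraction of Lemma \ref{lemma:1/2} rather than the sharper $3/4$-factor contraction stated here.
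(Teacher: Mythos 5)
Your proof is correct and rests on the same key idea as the paper's: among the $4f+2$ round-$(r+2)$ messages, at most $2f+1$ can come from first movers of round $r+1$ and at most $f$ from Byzantine nodes, so at least $f+1$ must come from second movers of round $r+1$, forcing $l\le b$ and $u\ge a$. The only difference is presentational --- by pairing these with the coarser bounds $l\ge m_r$ and $u\le M_r$ you collapse the paper's four-case analysis on the relative positions of the endpoints of $F_{r+1}$ and $S_{r+1}$ into a single uniform containment $v_i[r+3]\in\left[\frac{m_r+a}{2},\frac{b+M_r}{2}\right]$, which is a slightly streamlined write-up of the same argument.
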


\begin{proof}

First, let us define

\begin{itemize}
    \item $F_{r+1} = [a^F_{r+1}, b^F_{r+1}]$

    \item $S_{r+1} = [a^S_{r+1}, b^S_{r+1}]$
\end{itemize}
Note that all these four values (at a respective bound) are in the range of $[m_r, M_r]$ due to Lemma \ref{lemma:f1}. 

Now, we consider the smallest possible value for $F_{r+2} \cup S_{r+2}$.

\begin{itemize}
    \item Case I: if $a^F_{r+1} < a^S_{r+1}$. 
    
    In this case, the smallest value is $\frac{a^F_{r+1}+a^S_{r+1}}{2}$. This is because there are at most $2f+1$ $a^F_{r+1}$ in $F_{r+1}$ and up to $f$ Byzantine nodes can send values $\leq a^F_{r+1}$. The remaining $f+1$ values must come from $S_{r+1}$ whose smallest value is $a^S_{r+1}$. After discarding $f$ values, at least a value that is $\geq a^S_{r+1}$ remains to be used to update the state value at Line 7.

    \item Case II: if $a^F_{r+1} \geq a^S_{r+1}$. 
    
    In this case, the smallest value is $a^S_{r+1}$. This is because there could be more than $4f+2$ $a^S_{r+1}$'s in $S_{r+1} \cup F_{r+1}$. 
\end{itemize}

Next, we consider the largest possible value for $F_{r+2} \cup S_{r+2}$. 

\begin{itemize}
    \item Case III: if $b^F_{r+1} > b^S_{r+1}$.
    
    In this case, the largest value is $\frac{b^F_{r+1}+b^S_{r+1}}{2}$. This is because there are at most $2f+1$ $b^F_{r+1}$ in $F_{r+1}$ and up to $f$ Byzantine nodes can send values $\geq b^F_{r+1}$. The remaining $f+1$ values must come from $S_{r+1}$ whose largest value is $b^S_{r+1}$. After discarding $f$ values, at least a value that is $\leq b^S_{r+1}$ remains to be used to update the state value at Line 7.

    \item Case IV: if $b^F_{r+1} \leq b^S_{r+1}$.
    
    In this case, the largest value is $b^S_{r+1}$. This is because there could be more than $4f+2$ $b^S_{r+1}$'s in $S_{r+1}$. 
\end{itemize}

Now, we can consider the following four cases to bound the interval length of $F_{r+2} \cup S_{r+2}$:

\begin{itemize}
    \item $a^F_{r+1} < a^S_{r+1}$ and $b^F_{r+1} > b^S_{r+1}$:

    The interval length is

    \begin{align*}
    \frac{b^F_{r+1}+b^S_{r+1}}{2} - \frac{a^F_{r+1}+a^S_{r+1}}{2} &=    \frac{1}{2}\{(b^F_{r+1}-a^F_{r+1}) + (b^S_{r+1}-a^S_{r+1})\}\\
    &= \frac{1}{2}\{(M_r-m_r)+(M_r-m_r)/2\} 
    = \frac{3(M_r-m_r)}{4}
    \end{align*}
    
    \item $a^F_{r+1} < a^S_{r+1}$ and $b^F_{r+1} \leq b^S_{r+1}$:

    The interval length is

    \begin{align*}
        b^S_{r+1} - \frac{a^F_{r+1}+a^S_{r+1}}{2}  &= \frac{1}{2}\{(b^S_{r+1}-a^S_{r+1})+(b^S_{r+1}-a^F_{r+1})\}\\
        &= \frac{1}{2}\{(M_r-m_r)/2+(M_r-m_r)\} = \frac{3(M_r-m_r)}{4} 
    \end{align*}
    
    \item $a^F_{r+1} \geq a^S_{r+1}$ and $b^F_{r+1} > b^S_{r+1}$:

    The interval length is

    \begin{align*}
    \frac{b^F_{r+1}+b^S_{r+1}}{2} - a^S_{r+1} &=    \frac{1}{2}\{(b^F_{r+1}-a^S_{r+1}) + (b^S_{r+1}-a^S_{r+1})\}\\
    &= \frac{1}{2}\{(M_r-m_r)+(M_r-m_r)/2\} 
    = \frac{3(M_r-m_r)}{4}
    \end{align*}

    \item $a^F_{r+1} \geq a^S_{r+1}$ and $b^F_{r+1} \leq b^S_{r+1}$:

    The interval length is

    \[
    b^S_{r+1} - a^S_{r+1} = \frac{(M_r-m_r)}{2}
    \]
\end{itemize}

In each case above, we saw that compared to $F_r \cup S_r$, the interval length of $F_{r+2} \cup S_{r+2}$ shrinks by at least $1/4$, proving the lemma.
\end{proof}

Now, we are ready to prove that MAC-BAC converges with the desirable convergence rate. 

\begin{theorem}
    MAC-BAC achieves $\epsilon$-agreement in 
    \begin{equation}
    \label{eq:p-end}
        p_{end} \geq 2 \cdot \log_{\frac{3}{4}}\epsilon 
    \end{equation}
    rounds.
\end{theorem}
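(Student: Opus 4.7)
My plan is to turn the two-round contraction factor of Lemma~\ref{lemma:interval-length} into a closed-form bound on $p_{end}$ by straightforward iteration, and then observe that once the fault-free interval of round-$p_{end}$ states shrinks below $\epsilon$, the outputs $v_i[p_{end}+1]$ automatically satisfy $\epsilon$-agreement. As a base case, I would invoke the scaling to $[0,1]$ together with Lemma~\ref{lemma:f1} (and the remark immediately following it, which extends the containment to $F_1 \cup S_1$) to conclude $M_1 - m_1 \leq 1$.

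Next, I would repeatedly apply Lemma~\ref{lemma:interval-length} with $r = 1, 3, 5, \ldots, 2k-1$ to obtain, by a simple induction on $k$,
\[
M_{2k+1} - m_{2k+1} \;\leq\; \left(\tfrac{3}{4}\right)^{k}\bigl(M_1 - m_1\bigr) \;\leq\; \left(\tfrac{3}{4}\right)^{k}.
\]
Since each fault-free node outputs $v_i[p_{end}+1] \in F_{p_{end}} \cup S_{p_{end}}$, the $\epsilon$-agreement requirement reduces to forcing the right-hand side below $\epsilon$. Solving $(3/4)^k \leq \epsilon$ for $k$ is where the one subtle point arises: because the base $3/4$ is smaller than $1$, the function $\log_{3/4}$ is \emph{decreasing}, so the inequality flips and gives $k \geq \log_{3/4}\epsilon$ (which is positive for $\epsilon < 1$). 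Choosing the smallest such $k$ and setting $p_{end} \geq 2k$ then yields the stated bound $p_{end} \geq 2\log_{3/4}\epsilon$, modulo a harmless off-by-one.

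The main obstacle is really just careful bookkeeping of the round indices. I need to check that Lemma~\ref{lemma:interval-length} can indeed be iterated on the odd arithmetic progression starting from $r=1$ (its hypothesis is $r \geq 1$, so this is fine), that the inductive step uses the containment statement of Lemma~\ref{lemma:f1} to keep successive intervals nested inside $[m_1, M_1]$, and that the final output step at Line~10 inherits the bound on $M_{p_{end}} - m_{p_{end}}$ rather than reintroducing extreme Byzantine values. Everything else --- the per-step contraction factor $3/4$, validity, and the fact that extremes are discarded --- has already been established in the preceding lemmas, so no additional machinery is required.
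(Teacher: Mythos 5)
Your proposal is correct and follows essentially the same route as the paper: both iterate the two-round contraction factor $\tfrac{3}{4}$ from Lemma~\ref{lemma:interval-length} starting from the unit-length interval obtained by scaling, and then solve $(3/4)^{\lfloor p_{end}/2\rfloor}\leq\epsilon$, with the sign flip from dividing by $\log\tfrac{3}{4}<0$ handled the same way. Your version merely makes the base case and the round-index bookkeeping more explicit than the paper's terse derivation.
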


\begin{proof}
    By the conclusion of Lemma \ref{lemma:interval-length}, $F_{r} \cup S_{r} \leq \frac{3}{4} \cdot F_{r-2} \cup S_{r-2}$. Therefore, to satisfy $\epsilon$-agreement, the number of iteration $r$ must satisfy the following inequality.

    \begin{align*}      
        \epsilon &\geq \frac{3}{4}^{\lfloor \frac{r}{2} \rfloor}\\
        \log \epsilon &\geq \lfloor \frac{r}{2} \rfloor \cdot \log \frac{3}{4}\\
        2\cdot \log \epsilon &\geq r \cdot \log \frac{3}{4}\\
        \frac{2\cdot \log \epsilon}{\log \frac{3}{4}} &\leq r\\
        r &\geq 2\cdot \log_{\frac{3}{4}} \epsilon 
    \end{align*}

    Therefore, $\epsilon$-agreement will be achieved in $\geq 2\cdot \log_{\frac{3}{4}} \epsilon$ rounds. If we define $p_{end}$ as the smallest integer that satisfies the inequality in Algorithm MAC-BAC, then $\epsilon$-agreement is achieved. 

\end{proof}

\section{Byzantine Randomized Binary Consensus: MAC-BRC}

Assuming $n \geq 5f + 1$, our algorithm MAC-BRC correctly solves Byzantine randomized binary consensus. In MAC-BRC, each node is assumed to have a common coin provided by a trusted dealer, as in the work by Rabin \cite{RandomizedBG_Rabin_SFCS_1983}, which guarantees that every node shares the same sequence of random bits $b_1, b_2, \cdots , b_k, \cdots$ with value $0$ or $1$, each with probability $\frac{1}{2}$. Additionally, the common coin is ``global,'' meaning that the $k$-th call to $coinflip()$ (Line 9 of Algorithm MAC-RBC) by a fault-free node will return the same bit, $b_c$, to all nodes invoking the $k$-th $coinflip()$. 

Our algorithm is inspired by \cite{Raynal_PODC14_optimalAsyncConsensus}, especially the way we use the common coin to decide whether it is safe to output a value. As mentioned earlier, the key technical contribution is the usage of implicit quorum, which will become clear when we present the algorithm. 

\subsection{MAC-RBC}

MAC-RBC is presented in Algorithm \ref{alg:simple-RBC}. 
Nodes proceed in phases. In phase $p$, each node $i$ first does a mac-broadcast of an $EST$ message containing $i$'s ``estimated'' value (that $i$ estimates to be the output based on the information it collected in the previous phase) and phase number to other nodes. Once this mac-broadcast has completed, an ACK will be received. Meanwhile, a background event handler processes all $EST$ messages to determine when a value can be safely added to a local estimated set, $estValues_i[p]$. The way the handler is constructed ensures that the value added $estValues_i[p]$ must be an estimate value $v_i$ by some fault-free node. 

Once $estValues_i[p]$ is non-empty, the main thread resumes at Line 4, where node $i$ mac-broadcasts an $AUX$ message for this value at node $i$. Another background event handler processes $AUX$ messages, and adds the identifiers of all nodes which sent $AUX$ messages (for a particular value $w$) to its set $U_i$. $U_i$ is used to count the number of other nodes that supports a certain value $w$. Node $i$ then mac-broadcasts a $COMPLETE$ message indicating that it has completed its broadcasting of an $AUX$ message. 

Once our Condition WAIT (defined below) -- which ensures that a node only adds fault-free values which have been sent and received by sufficiently many other nodes -- is satisfied, node $i$ will have some value(s) in its set $values_i[p]$. A call to $coinflip()$ on Line 9 employs the global common coin whose returned value $c$ is then compared to $values_i[p]$. If these values are equal, that is $values_i[p] = \{v\} = c$, node $i$ will output $v$. Otherwise, node $i$ will adopt the value of the common coin and continue to the subsequent phase until it outputs a value.  

The key novelty is the construction of Condition WAIT, defined in Definition \ref{def:wait}. The condition makes sure that enough information is shared between any pair of fault-free nodes upon the satisfaction of Condition WAIT. On a high-level, the condition relies on two key elements: (i) a counter set $U_i[p]$ that keeps track of nodes that $i$ knows; and (ii) an ``implicit quorum'' $X$ which contains the nodes that saw the same estimate value $v$ and have completed their mac-broadcast. Note that $X$ is implicit in the sense that the $X$ at node $i$ might not always intersect with the $X$ at node $j$. However, it turns out that it is already sufficient for our purpose. (More details in the proof of BC-Agreement in Section \ref{s:BC-Agreement}.)

\begin{definition}[Condition WAIT]
\label{def:wait}
    In phase $p$, node $i$ satisfies Condition WAIT with value $v$, if there exist two sets of nodes $X$ and $Y$ such that

    \begin{enumerate}
    
        \item $|X| \geq 2f+1$;
        
        \item $i$ received $(COMPLETE, p)$ message for each $x\in X$;

        \item $i$ received $(AUX, v, x, p)$ message for each $x\in X$ and some identical value $v$; 
        
        \item $|Y| = |U_i[p]|-f$;
        
        \item $i$ received $(AUX, *, y, p)$ message for each $y\in Y$;\footnote{Note that these $AUX$ messages might not contain $v$.}
        
        \item $X \subseteq Y$;
        
        \item let $value_i[p]$ be the set of values contained in $Y$'s $(AUX, *, *, p)$ messages; 
        
        \item $value_i[p] \subseteq estValues_i[p]$.\footnote{Note that $estValues_i[p]$ could keep growing even after the execution of Line 3 to Line 5, as the background message handler is long-living.}
        
    \end{enumerate}
\end{definition}

Figure \ref{fig:wait} illustrates the relation between set $Y$ and set $X$ specified in Condition WAIT. 

\begin{algorithm}[t]
\caption{Mac-RBC: Steps at each node $i$}
\label{alg:simple-RBC}
\begin{algorithmic}[1]
\footnotesize
\item[{\bf Local Variables:}]
    
    \item[] $p_i$ \COMMENT{phase, initialized to $0$}
    
    \item[] $v_i$ \COMMENT{state, initialized to $x_i$, the input at node $i$} 

    \item[] $estValues_i[p]$ \COMMENT{set, initialized to $\{\}$} 

    \item[] $U_i[p]$ \COMMENT{set, initialized to $\{\}$} 
    
    
    \item[] \hrulefill
    \begin{multicols}{2}    

    \WHILE{true}
        
        \STATE \textbf{mac-broadcast}$(EST, v_i, p_i)$

        \STATE \textbf{wait until} $estValues_i[p_i] \neq \emptyset$

        \FOR{each $w \in estValues_i[p_i]$}
            \STATE \textbf{mac-broadcast}$(AUX, w, i,  p_i)$
        \ENDFOR

        \STATE \textbf{mac-broadcast}$(COMPLETE, p_i)$
 
        \STATE \textbf{wait until} Condition WAIT is satisfied \\\hspace{20pt}with some value $z$
        

        \STATE $c \gets coinflip()$

        \IF{$values_i[p_i] = \{v\}$}
            \IF{$v = c$}
                \STATE output $v$
            \ENDIF
            \STATE $v_i \gets v$
        \ELSE
            \STATE $v_i \gets c$     
        \ENDIF

        \STATE $p_i \gets p_i+1$
        
    \ENDWHILE

    \columnbreak
    \item[//\textit{Background EST message handler}]
    \item[\textbf{Upon} receiving ($EST, v, p$) \textbf{do}] 
    \IF{($EST, v, p$) is received from $f + 1$ nodes and ($EST, v, p$) not yet broadcast by $i$}
        \STATE \textbf{mac-broadcast}($EST, v, p$))
    \ENDIF
    \IF{($EST, v, p$) is received from $2f + 1$ nodes}
        \STATE $estValues_i[p] \gets estValues_i[p] \cup \{v\}$
    \ENDIF
    
    \item[]
    
    \item[//\textit{Background AUX message handler}]
    \item[\textbf{Upon} receiving $(AUX, *, j, p)$ \textbf{do}]
    \STATE $U_i[p] \gets U_i[p] \cup \{j\}$\COMMENT{Even if $j$ sends two different $AUX$ msgs, $j$ is added only once}
\end{multicols}  
\end{algorithmic}
\end{algorithm}

\begin{figure}[t]
    \centering
    \includegraphics[width=\textwidth]{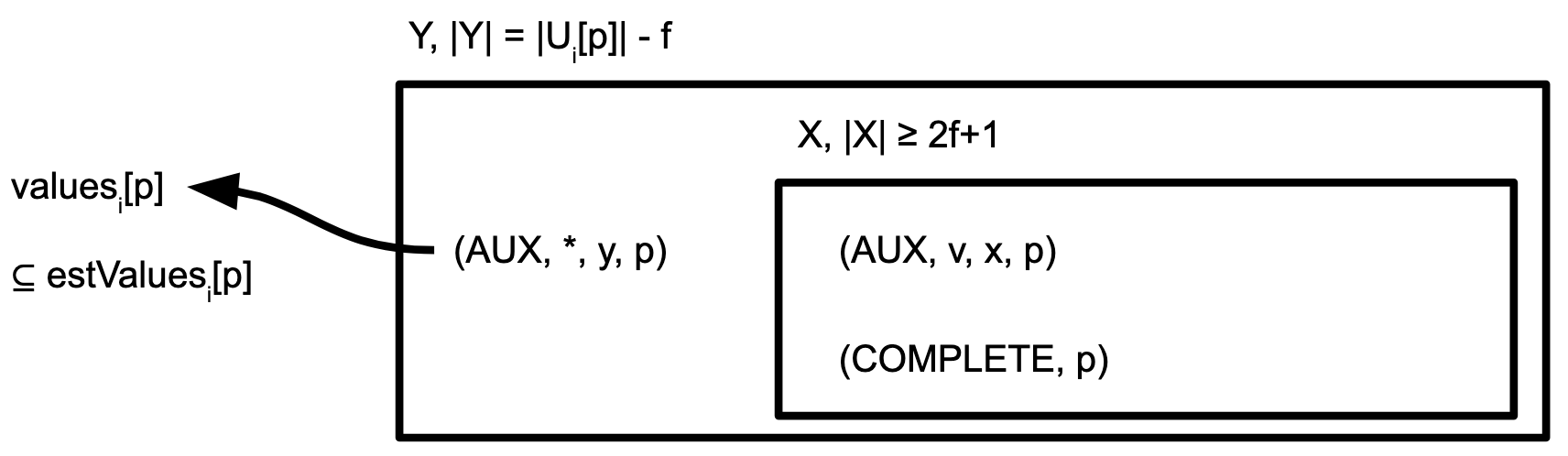}
    \caption{Illustration of Condition WAIT}
    \label{fig:wait}
\end{figure}

\subsection{Correctness Proof}

The BC-Validity proof follows from the construction of $estValues_i[p]$ (the $f+1$ threshold), and the observation that the output must be some value from $estValues_i[p]$. 

\begin{theorem}
Given that $n \geq 5f+1$, MAC-RBC satisfies BC-Validity. 
\end{theorem}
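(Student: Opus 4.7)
The plan is to reduce BC-Validity to showing that any value ever appearing in any fault-free node's $estValues_i[p]$ set must have been the starting estimate of some fault-free node in that same phase, and then to trace starting estimates back to initial inputs by induction. This reduction suffices because a fault-free node outputs $v$ only when $values_i[p_i] = \{v\}$, and clause~8 of Condition WAIT guarantees $v \in values_i[p_i] \subseteq estValues_i[p_i]$.

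The first step (call it Sub-claim~1) is: for every fault-free node $i$, every phase $p$, and every $w \in estValues_i[p]$, some fault-free node mac-broadcasts $(EST, w, p)$ at Line~2 (that is, as its initial action in phase~$p$, not via the echo rule). Since $w$ is added to $estValues_i[p]$ only after $(EST, w, p)$ is received from $2f{+}1$ distinct senders, at least $f{+}1$ of these senders are fault-free. The key observation is that the echo branch of the EST handler fires only after $f{+}1$ senders of $(EST, w, p)$ are observed, so the \emph{first} fault-free sender of $(EST, w, p)$ cannot have taken the echo path---at that moment at most $f$ Byzantine senders exist, strictly below the echo threshold. Hence the first fault-free sender must have executed Line~2 with initial estimate $w$.

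The second step (Sub-claim~2) is a strong induction on the phase number: for every fault-free node $i$ and every phase $p$, the value $v_i$ at the start of phase~$p$ equals $x_j$ for some fault-free node $j$. The base case $p=0$ holds by initialization. For the inductive step, $v_i$ was set at the end of phase $p-1$ in one of two ways. If $values_i[p-1] = \{v\}$, then $v_i \gets v \in estValues_i[p-1]$; Sub-claim~1 then produces a fault-free node whose start-of-phase-$(p-1)$ estimate equals $v$, which by the induction hypothesis is some fault-free input. Otherwise $v_i \gets c$ with $c\in\{0,1\}$; in this case $values_i[p-1]$ is non-singleton, so in the binary setting it must be $\{0,1\}$, and clause~8 of Condition WAIT forces $\{0,1\}\subseteq estValues_i[p-1]$. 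Applying Sub-claim~1 and the induction hypothesis to each element, \emph{both} $0$ and $1$ are inputs of fault-free nodes, so whichever bit the coin returns is still a valid input.

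Combining the two sub-claims, the output $v$ lies in $estValues_i[p_i]$ and therefore is the initial input of some fault-free node, proving BC-Validity. The main obstacle is the coin-flip branch in Sub-claim~2: at first glance the coin could inject an ``arbitrary'' bit, but this branch is reached only when both $0$ and $1$ are already present as legitimately propagated estimates, so in the binary setting the coin cannot violate validity. Everything else is routine bookkeeping through the $f{+}1$-echo and $2f{+}1$-threshold structure of the EST handler; notably, the argument never invokes $n \geq 5f{+}1$, as BC-Validity already follows from $n \geq 3f{+}1$ (which is implicit in the echo/threshold counts).
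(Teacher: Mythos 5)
Your proof is correct and follows essentially the same route as the paper's: both trace every value in $estValues_i[p]$ back to a fault-free broadcaster via the $f+1$/$2f+1$ thresholds of the EST handler, note that clause~8 of Condition WAIT puts the output inside $estValues_i[p]$, and handle the two update branches at Lines 14 and 16. Your write-up is in fact tighter in two places the paper leaves implicit: the observation that the \emph{first} fault-free sender of $(EST, w, p)$ cannot have fired the echo rule (so $w$ really is some fault-free node's start-of-phase estimate rather than merely a relayed value), and the explicit phase induction showing the coin-adoption branch is safe because a non-singleton $values_i[p]$ forces both $0$ and $1$ to be legitimately proposed.
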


\begin{proof}
    Fix a phase $p$ and let node $i$ be a fault-free node with value $v \in values_i$ which has been mac-broadcast as an estimate value 
    by a fault-free node. By the wait statement at Line 3, since each fault-free node $i$ mac-broadcasts the values within its set $estValues_i$, and by the Condition WAIT, the set $values_i$ contains only values from fault-free nodes. The set $estValues_i$ contains only values from fault-free nodes because all values added to $estValues_i$ must have been sent by at least $f+1$ nodes on in order to pass Line 20. There are at most $f$ Byzantine nodes, so one of these $(EST, v, p)$ messages
    must have been broadcast by a fault-free node.
    
    If $values_i = \{v\} = c$, the value of the common coin, node $i$ outputs $v$ at Line 13 and sets its estimate value to $c$. If $values_i = \{v, v'\}$, both values have been processed by fault-free nodes, and node $i$ adopts the value of the common coin as the estimate value for node $i$ in phase $p_i + 1$ at Line 17. In both cases, the estimate value of a fault-free node is a value that has been proposed by a fault-free node. 
\end{proof}

The BC-Termination proof is focused on showing that as long as $n \geq 5f+1$, then Condition WAIT can always be satisfied under all possible scenarios. Moreover, the termination with probability 1 roughly follows the proof structure in \cite{Raynal_PODC14_optimalAsyncConsensus}, which relies on the usage of common coin and the cardinality of $value_i[p]$ (the condition to check at Line 10). The full proof is presented in Appendix \ref{app:bc-termination}. 

\begin{theorem}
Given that $n \geq 5f+1$, MAC-RBC terminates with probability 1. 
\end{theorem}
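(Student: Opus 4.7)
\medskip
\noindent\textbf{Proof Proposal.} The plan is to decompose the argument into (i) a liveness claim that every fault-free node eventually exits the wait at Line 8 in every phase it enters, and (ii) a probabilistic claim that the decision branch at Line 11 fires within a finite number of phases with probability $1$.

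For (i), I first argue that Condition WAIT is eventually satisfiable at every fault-free node. Since every fault-free node mac-broadcasts its $EST$ message at Line 2 and there are at least $4f+1$ fault-free nodes with binary estimates, pigeonhole gives at least $2f+1$ fault-free nodes sending the same $EST$ value $v^{*}$; thus every fault-free $i$ receives $(EST,v^{*},p)$ from $\ge 2f+1$ senders and inserts $v^{*}$ into $estValues_i[p]$, so no fault-free node blocks at Line 3, and each eventually mac-broadcasts its $AUX$ and $COMPLETE$ messages. By the ACK guarantee, every fault-free $i$ receives these from all $\ge 4f+1$ fault-free nodes, so $|U_i[p]| \ge 4f+1$ and any $Y \subseteq U_i[p]$ with $|Y|=|U_i[p]|-f$ has $|Y|\ge 3f+1$. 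Picking $Y$ to exclude every Byzantine sender in $U_i[p]$ (at most $f$ of them) makes $Y$ all-fault-free; each $y\in Y$ sent $AUX$ only for values in its own $estValues_y[p]$, and by the echo/uniformity properties of the $EST$ handler those values eventually lie in $estValues_i[p]$, giving item~8 of Condition WAIT. Among the $\ge 3f+1$ fault-free nodes of $Y$, another pigeonhole step produces at least $2f+1$ that sent $AUX$ with a common value $v$, serving as the required set $X$.

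For (ii), the central lemma is the \emph{implicit-quorum intersection} statement: if some fault-free $i$ satisfies Condition WAIT in phase $p$ with $values_i[p]=\{v\}$, then for every fault-free $j$ and every valid choice of $Y_j$, $v \in values_j[p]$. I would prove this by noting that $|X_i|=2f+1$ contains at least $f+1$ fault-free nodes, each of which mac-broadcast $(AUX,v,*,p)$; the ACK guarantee places all these $f+1$ identifiers in $U_j[p]$, and since $Y_j$ excludes only $f$ members of $U_j[p]$, at least one such fault-free node survives in $Y_j$ and contributes $v$ to $values_j[p]$.

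Given this lemma, termination follows by the standard Rabin / Most\'efaoui--Moumen--Raynal argument. In any phase $p$, the common coin returns a bit $c$ independent of executions up to the call to $coinflip()$. If some fault-free $i$ exits the wait with $values_i[p]=\{v\}$, then by the lemma every other fault-free $j$ has $v \in values_j[p]$; regardless of whether $values_j[p]=\{v\}$ or $\{v,v'\}$, node $j$ sets $v_j \gets c$, and if $c=v$ (probability $1/2$) either $j$ already decides in phase $p$, or in phase $p+1$ every fault-free estimate is $v$, forcing $values_j[p+1]=\{v\}$ and a decision when the coin next equals $v$ (again probability $1/2$). If no fault-free singleton exists in phase $p$, the coin still has probability $1/2$ of matching any singleton that arises in a future phase. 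Hence the probability of not deciding within $k$ phases is bounded by $(1/2)^{\lfloor k/2\rfloor}\to 0$, giving BC-Termination.

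The main obstacle will be the implicit-quorum intersection lemma: the classical proof uses $|X|+|Y|>n+f$, which is unavailable without knowing $n$. The replacement argument hinges on the fact that $Y_j$ is defined as ``$U_j[p]$ minus $f$ nodes'' (so it eliminates at most $f$ identifiers, independent of $n$), combined with the mac-broadcast guarantee forcing the $\ge f+1$ fault-free members of $X_i$ into $U_j[p]$. This is also precisely where the resilience bound $n\ge 5f+1$ is used: it is what guarantees the supply of $\ge 4f+1$ fault-free nodes needed to sustain $|X|\ge 2f+1$, $|Y|\ge 3f+1$, and the pigeonhole step that produces a common $AUX$ value in $X$.
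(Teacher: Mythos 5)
Your overall decomposition matches the paper's: first show no fault-free node blocks at the waits of Lines~3 and~8, then use an intersection lemma for singleton $values$ sets together with the common coin to force eventual unanimity (which then persists by the BV-broadcast stability of $estValues$) and decision with probability~1. Two steps, however, would fail as written.

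First, in the liveness argument you extract the set $X$ by applying pigeonhole \emph{inside} $Y$: from $|Y|\ge 3f+1$ fault-free nodes you claim $2f+1$ of them sent $AUX$ with a common value. But $\lceil (3f+1)/2\rceil < 2f+1$ for every $f\ge 1$ (e.g., $f=1$ gives only $2$ of the needed $3$), so this pigeonhole does not go through. The order must be reversed: apply pigeonhole to \emph{all} $n-f\ge 4f+1$ fault-free nodes to obtain a set $X$ of $\ge 2f+1$ fault-free nodes that eventually mac-broadcast $(AUX,v,\cdot,p)$ and then $(COMPLETE,p)$ for a common $v$, and only then choose $Y=U_i[p]\setminus Z$ where $Z$ consists of $f$ nodes of $U_i[p]\setminus X$, prioritizing Byzantine identifiers; this is possible because $|U_i[p]\setminus X|\ge 2f\ge f$, it guarantees $X\subseteq Y$ with $Y$ all fault-free, and it is exactly where $n\ge 5f+1$ is consumed. (Your direct $2f+1$-pigeonhole for the $EST$ wait at Line~3 is fine, and is in fact slightly cleaner than the paper's route through the $f+1$ echo threshold.)

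Second, your implicit-quorum intersection lemma is stated for \emph{every} fault-free $j$, but the proof you sketch---``the ACK guarantee places all $f+1$ fault-free members of $X_i$ in $U_j[p]$''---only works if $j$ evaluates Condition WAIT \emph{after} $i$ does. The reason those identifiers are guaranteed to be in $U_j[p]$ is that each $x\in X_i$ sent $COMPLETE$ only after receiving the ACK for its $AUX$ broadcast, and $i$ received those $COMPLETE$ messages by time $T_i$; hence every fault-free node holds those $AUX$ messages by $T_i$, and therefore by $T_j$ provided $T_j\ge T_i$. The paper handles this by proving only the symmetric statement (both $i$ and $j$ have singletons, WLOG $T_j\ge T_i$), which is all your probabilistic argument actually needs: a node $j$ with a doubleton trivially satisfies $v\in values_j[p]$, and a node $j$ with a singleton is covered by the symmetric lemma with the roles ordered by completion time. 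Restrict the lemma accordingly (or argue by cases on which node completes Condition WAIT first) and the rest of your coin argument goes through as in the paper.
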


\subsubsection{Proof of BC-Agreement and Implicit Quorum in MAC-RBC}
\label{s:BC-Agreement}

Let us define $values_i^r[p]$ to be the set of values $values_i[p]$ right after node $i$ completes Line 8. That is, the $values_i[p]$ that node $i$ derived from Condition WAIT (Definition \ref{def:wait}). As mentioned earlier, $X$ identified in Condition WAIT could be different for two different nodes. This is because $n$ is unknown, and two nodes might use different sets of $2f+1$ nodes as $X$. However, in the proof of lemma below, we demonstrate that under a certain case, $X$ at node $i$ is guaranteed to intersect with $Y$ at node $j$. This is mainly the usage of the COMPLETE message. Even though this claim does \textit{not} imply that $X$ at node $i$ will intersect with $X$ at node $j$; however, due to the usage of a common coin, this is already enough for showing agreement of MAC-RBC. 

\begin{lemma}
\label{lemma:v=u}
    Fix a phase $p$. For any fault-free $i$ and $j$ with $values_i^r[p] = \{v\}$ and $values_j^r[p] = \{u\}$, then $v = u$.
\end{lemma}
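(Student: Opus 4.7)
The plan is to prove the statement by contradiction: assume $v \neq u$, and derive that $v$ must appear in $values_j^r[p]$ as well, contradicting $values_j^r[p] = \{u\}$.

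Let $t_i$ and $t_j$ be the moments at which nodes $i$ and $j$ first see Condition WAIT satisfied for phase $p$. Without loss of generality assume $t_i \le t_j$; the other case is symmetric, swapping the roles of $(i,v)$ and $(j,u)$. The heart of the argument is a \emph{delivery lemma} for the COMPLETE messages: for every fault-free $x \in X_i$, node $j$ has received $(AUX, v, x, p)$ by time $t_j$. The reasoning is that $i$ already holds $(AUX, v, x, p)$ and $(COMPLETE, p)$ from $x$ by time $t_i$ (clauses 2--3 of Condition WAIT), and a fault-free node executes Lines~5--7 strictly sequentially, so $x$ can only have issued its $(COMPLETE, p)$ mac-broadcast after all of its $(AUX, \cdot, x, p)$ mac-broadcasts returned an ACK. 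By the ACK semantics of the abstract MAC layer, this forces every fault-free node (in particular $j$) to have received $(AUX, v, x, p)$ before $x$ even started its COMPLETE, and therefore well before $t_i \le t_j$.

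With the delivery lemma in hand, the counting step is a clean pigeonhole. Clause~1 of Condition WAIT gives $|X_i| \ge 2f+1$, so at least $f+1$ elements of $X_i$ are fault-free; call this set $X_i^{\mathrm{ff}}$. The delivery lemma yields $X_i^{\mathrm{ff}} \subseteq U_j[p]$ at time $t_j$. But clause~4 states $|U_j[p] \setminus Y_j| = f$, so at most $f$ of the $\ge f+1$ members of $X_i^{\mathrm{ff}}$ can lie outside $Y_j$, leaving at least one $x \in X_i^{\mathrm{ff}} \cap Y_j$. For this $x$, node $j$ has received $(AUX, v, x, p)$ from a sender inside $Y_j$, so by clause~7 the value $v$ belongs to $values_j^r[p]$. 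This contradicts $values_j^r[p] = \{u\}$ since $v \neq u$.

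The only delicate step is the delivery lemma: this is exactly where the COMPLETE round pays off, converting the intra-node serialization of Lines~5--7 into an \emph{inter-node} delivery guarantee for $(AUX, v, x, p)$ at $j$, via the ACK property of mac-broadcast. Once that is in place, the remainder is a pigeonhole count against the bound $|U_j[p] \setminus Y_j| = f$ supplied by clause~4 of Condition WAIT. Notably, the argument never asserts that $X_i$ and $X_j$ intersect -- that is the sense in which the quorum remains \emph{implicit} -- it only needs $X_i$ to intersect $Y_j$, which is all the COMPLETE mechanism has to deliver.
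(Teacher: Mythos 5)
Your proof is correct and follows essentially the same route as the paper's: the ``delivery lemma'' you isolate is exactly the paper's Claim~\ref{claim:RBC} combined with the observation that $U_j[p] \supseteq Y^v$ at time $T_j$, and your pigeonhole against $|Y_j| = |U_j[p]| - f$ matches the paper's intersection argument between $Y_j$ and the $(f{+}1)$-sized set of fault-free $X_i$-members. The only cosmetic difference is that you phrase it as a contradiction while the paper concludes $v \in value_j[p]$ directly.
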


\begin{proof}
    Assume node $i$ completes line 8 at time $T_i$ with $values_i^r[p] = \{v\}$. By construction, node $i$ has Condition WAIT satisfied with value $v$.  

    From conditions (1), (4), (6) and (7) of Condition WAIT, node $i$ has received $AUX$ messages 
    from a set $Y_i$ of size $|U_i[p]|-f \geq 2f+1$. (At this point, $estValues_i[p]$ might contain some value other than $v$, but we do not care about it.) We first prove the following claim.

    \begin{claim}
        \label{claim:RBC}
        At least $f+1$ fault-free nodes in $Y_i$ have completed line 5 with value $v$ in the $AUX$ message before time $T_i$. That is, at least $f+1$ fault-free nodes have mac-broadcast($AUX, v, y, p$) for some $y \in Y_i$ by time $T_i$.
    \end{claim}

    \begin{proof}[Proof of Claim \ref{claim:RBC}]
         By condition (3) and (6) in Condition WAIT, $Y_i$ contains $X_i$, and every fault-free node in $X_i$ have completed line 5 with value $v$ before broadcasting the $COMPLETE$ message. Since $|X_i| \geq 2f+1$, and up to $f$ nodes can be Byzantine, at least $f+1$ nodes in $X_i$ has completed line 5 with value $v$ before time $T_i$, proving the claim. 
         
    \end{proof}
    
    Let us denote the set of fault-free nodes identified in Claim \ref{claim:RBC} by $Y^v$. (Note that $Y_v$ is a superset of $X_i$)

    Now consider node $j$ with $values_j^r[p] = \{u\}$. Without loss of generality, assume that $j$ completes line 8 at some later time $T_j$, i.e., $T_j \geq  T_i$. This assumption together with Claim \ref{claim:RBC} imply that $U_j[p]$ contains $Y^v$ at time $T_j$, due to the guarantee of mac-broadcast. 
    
    By the definition of Condition WAIT, $Y_j$ contain $|U_j[p]| - f$ nodes at time $T_j$. This implies that the intersection of $Y_j$ and $Y^v$ is non-empty. This is because the size of $Y^v$ is at least $f+1$.  Therefore, value $v$ must be in $value_j[p]$, i.e., $v \in value_j[p_j]$.  Since $value_j[p]$ contains a single element, this  implies that $v=u$.
\end{proof}


\begin{theorem}
Given $n \geq 5f+1$, MAC-RBC satisfies BC-Agreement. 
\end{theorem}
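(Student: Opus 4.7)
The plan is to pick the earliest phase $p^{*}$ in which any fault-free node outputs (this exists by BC-Termination), let $v^{*}$ denote the value output in phase $p^{*}$ by some fault-free node $i$, and prove by induction on $p \geq p^{*}$ that every fault-free node ends phase $p$ with estimate $v^{*}$ and any output produced during phase $p$ equals $v^{*}$. Because $i$ took the output branch, we know $values_i^r[p^{*}] = \{v^{*}\}$ and $v^{*} = c_{p^{*}}$, where $c_{p^{*}}$ is the common-coin value in phase $p^{*}$. Since nodes in later phases can only output a value that lies in their singleton $values^r$ set, this invariant forces every fault-free output, in any phase, to equal $v^{*}$.

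For the base case $p = p^{*}$, I would reason about every other fault-free node $j$ that completes line 8 in phase $p^{*}$. If $values_j^r[p^{*}]$ is a singleton $\{u\}$, Lemma \ref{lemma:v=u} forces $u = v^{*}$; since then $u = c_{p^{*}}$, node $j$ also outputs $v^{*}$ at lines 11--13 and sets $v_j \gets v^{*}$. If $|values_j^r[p^{*}]| = 2$, the else-branch at line 16 sets $v_j \gets c_{p^{*}} = v^{*}$. In either case no output in phase $p^{*}$ differs from $v^{*}$ and every fault-free node begins phase $p^{*}+1$ with estimate $v^{*}$.

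For the inductive step, suppose every fault-free node enters phase $p+1$ with estimate $v^{*}$. Only Byzantine nodes can mac-broadcast $(EST, v', p+1)$ for $v' \neq v^{*}$, so at most $f$ distinct senders ever emit such a message; no fault-free node sees $f+1$ copies, so the amplification at lines 19--20 never fires for $v'$, and hence no fault-free node ever accumulates $2f+1$ copies, so $v' \notin estValues_j[p+1]$ for any fault-free $j$. Clause (8) of Condition WAIT then yields $values_j[p+1] \subseteq estValues_j[p+1] \subseteq \{v^{*}\}$, and since at least $4f+1 \geq 2f+1$ fault-free nodes broadcast $v^{*}$, the set is exactly $\{v^{*}\}$. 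Thus every fault-free node has $values_j^r[p+1] = \{v^{*}\}$, and whether or not the coin equals $v^{*}$, each such node either outputs $v^{*}$ or merely sets $v_j \gets v^{*}$, preserving the invariant and completing the induction.

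The main obstacle will be the base-case synchronization in phase $p^{*}$: without Lemma \ref{lemma:v=u} we could not rule out a different singleton $\{u\} \neq \{v^{*}\}$ at some fault-free node $j$, which would let $j$ output $u \neq v^{*}$ and immediately violate agreement. Once the base case is locked, the inductive step is relatively routine, though some care is needed to distinguish the $f+1$ amplification threshold from the $2f+1$ adoption threshold and to apply both against the bound of $f$ Byzantine senders of any spurious value.
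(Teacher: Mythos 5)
Your proof is correct and follows essentially the same route as the paper's: identify the first phase in which some fault-free node outputs, use Lemma \ref{lemma:v=u} plus the common coin to force every fault-free node to end that phase with the same estimate, and then argue that a unanimous estimate persists and forces the same output in the next phase. The only difference is that you spell out the persistence step (the $f{+}1$ amplification and $2f{+}1$ adoption thresholds against at most $f$ Byzantine senders) explicitly, whereas the paper delegates it to Lemma \ref{lemma:v} in the termination appendix.
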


\begin{proof}
    Let phase $p$ be the first phase at which a fault-free node $i$ outputs a value $v$ at Line 12. For any $j$ that also outputs a value in phase $p$, both $i$ and $j$ must output the same value, namely, the value of the common coin. 

    Consider any node $j$ that has not output in phase $p$. Observe that we have $values_i[p] = \{v\}$. It is then impossible by Lemma \ref{lemma:v=u} for $values_j[p] = \{v'\}$, with $v' \neq v$. Therefore, $values_j = \{v, v'\}$. Node $j$ will then execute Line 16 and set $v_j$ to be the value of the common coin in phase $p$. By construction, this value is $v$.
    
    Then, node $j$'s estimate value in phase $p + 1$ will be $v$. The output values must be an estimate value, and in phase $p+1$, all fault-free nodes have the same estimate value and will hence output $v$. This proves that the BC-Agreement property of MAC-RBC.
\end{proof}

\section{Impossibility}

In this section, we provide the intuition for our proof that without the knowledge of $f$, it is impossible to solve consensus. The full proof is included in Appendix \ref{app:impossible}.

We construct an indistinguishably proof by constructing scenarios in which there is no way for a node $i$ to distinguish whether another node $j$’s behavior is Byzantine or not, which could lead to a violation of validity. We assume the existence of an algorithm $A$ which solves consensus for a certain $n$ and $f$ without the knowledge of these values. We construct two scenarios with different value for $n$ and $f$. These scenarios remain indistinguishable to all nodes because no nodes have knowledge of $n$ or $f$. In the first scenario, node $i$ can only communicate with one other node to update its state, and in the second, node $i$ can receive messages from $\geq 2f+1$ other nodes within the system. In the second scenario, we impose a time delay, $D$, on all messages sent from any node other than an arbitrary node $j$. We observe that within the time interval $(0, D]$, a node $i$ cannot distinguish between the two scenarios (whether there is only one, or more than one other nodes within the system). Now, node $i$ has no way of determining whether the behavior of node $j$ is Byzantine or fault-free. Node $i$ then runs algorithm $A$, and may output a value that is outside the range of fault-free inputs if $i$ considers a Byzantine node $j$’s value, hence, violating validity.

\section{Summary}

This paper studies Byzantine consensus problems in the abstract MAC layer. We present MAC-BAC, a Byzantine approximate consensus algorithm, and MAC-RBC, a Byzantine randomized binary consensus algorithm. Both algorithms do not require the knowledge of $n$. To achieve so, we rely on the notion of implicit quorum. Therefore, our analysis is sufficiently different from prior work. One interesting open problem is the lower bound on the resilience of Byzantine consensus algorithms. 




\bibliography{references, reference-ipdps22, Tseng}

\appendix

\section{Proof of Termination of MAC-BAC}
\label{app:mac-bac-termination}

$p_{end}$ is defined in equation \ref{eq:p-end}. Termination for MAC-BAC is proven by induction on $p \geq 0$. The base case for $p=0$ holds since every node is initialized with $p_i = 0$.

Inductive Hypothesis: all fault-free nodes proceed to phase $p' \geq 0$ within finite time. 

Now consider phase $p$. When every node is in phase $\geq p$, each node has received $4f + 2$ messages with phase $p$. Thus, due to the inductive hypothesis, each node will update to phase $p + 1$ by line 8. 

\section{Proof of Validity of MAC-BAC}
\label{app:mac-bac-validity}

Fix a round $p$. After each round of the algorithm, the state of each fault-free node in round $p + 1$ must remain in the convex hull of the states of the fault-free nodes from round $p$. That is, the set of states of fault-free nodes in phase $p + 1$ is in a convex hull of the values in phase $p$ for all rounds $p$.




Take a random fault-free node $i$ in phase $p$ which is updating its phase to phase $p+1$. For this to occur, we know that node $i$ has received at least $f+1$ phase $p$ states.


$v_i$ is computed using the average of the $f+1$-th largest  value in $R_i[p_i]$ and the $f+1$-th smallest value in $R_i[p_i]$. There are at least $4f + 2$ values within $R_i[p_i]$, and there are at most $f$ byzantine nodes, so we know that these two values are within the range of all received fault-free states. Thus, $v_i$ must be within the range of all received fault-free states as well for every node $i$. 

So, we know that for a round $p$, the states of all nodes in round $p+1$ must be within the convex hull of values in round $p$.

Therefore, as all rounds of values remain within the convex hull of the values from the previous round, we know that the states of all nodes in round $p_{end}$ will be within the range of the fault-free inputs and will satisfy validity. 



\section{Proof of Lemma \ref{lemma:f1}}
\label{app:range}

\begin{proof}
    \begin{itemize}
        \item $x$ is a possible state when a first mover receives at least $2f + 1$ $x$'s and at most $f$ non-$x$'s. Note that Byzantine node can send any value.

        \item $y$ is a possible state when a first mover receives at least $2f + 1$ $y$'s and at most $f$ non-$y$'s. Note that Byzantine node can send any value.
    \end{itemize}
\end{proof}

\section{Proof of BC-Termination for MAC-RBC}
\label{app:bc-termination}

\begin{lemma}
    Each fault-free node outputs a value with probability 1.
\end{lemma}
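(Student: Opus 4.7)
The plan is to split the argument into a deterministic liveness piece (no fault-free node blocks forever at any wait statement inside a phase) and a probabilistic piece (with probability one, some phase produces an output at every fault-free node).

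First I would establish per-phase liveness by induction on the phase index. Assume inductively that every fault-free node enters phase $p$ within finite time and mac-broadcasts $(EST,v_i,p)$. Since $n\geq 5f+1$ there are at least $4f+1$ fault-free nodes, and because the input is binary, at least $2f+1$ of their estimates coincide with some common value $v^\star$. Using the $(f+1)$-amplification step of the $EST$ handler (Line 20), every fault-free node eventually receives $(EST,v^\star,p)$ from $\geq 2f+1$ distinct senders, so $estValues_i[p]\neq\emptyset$ and the wait at Line~3 resolves. Each fault-free node then mac-broadcasts $(AUX,\cdot,i,p)$ for every value in $estValues_i[p]$ followed by $(COMPLETE,p)$; by the delivery guarantee of mac-broadcast, every fault-free $i$ eventually receives these messages from all fault-free nodes. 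Taking $X$ to be the (at least $2f+1$) fault-free nodes which broadcast $AUX$ for $v^\star$ and also broadcast $COMPLETE$, and $Y:=U_i[p]\setminus Z$ for any $|Z|=f$ satisfying $X\subseteq Y$, every one of the seven requirements of Condition WAIT is met, so Line~8 resolves. Hence node $i$ reaches Line~9 within finite time, completes the remainder of the phase deterministically, and enters phase $p+1$.

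Next I would handle probabilistic convergence. Fix a phase $p$. By Lemma~\ref{lemma:v=u}, at most one value $v$ can appear as a singleton across the fault-free nodes. This gives two configurations: \emph{(A)} some fault-free node has $values_i^r[p]=\{v\}$ and every other fault-free $j$ satisfies either $values_j^r[p]=\{v\}$ or $values_j^r[p]=\{v,v'\}$; \emph{(B)} every fault-free node has $values_j^r[p]=\{0,1\}$. In configuration (B), all fault-free nodes execute Line~16 and adopt the common coin $c$, so they enter phase $p+1$ with \emph{identical} estimates. In configuration (A), with probability $1/2$ the coin $c$ equals $v$; in that event every singleton node outputs $v$ at Line~12, while every $\{v,v'\}$ node adopts $c=v$ at Line~16, so again all fault-free nodes enter phase $p+1$ with the common estimate $v$.

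Finally I would close the loop with a ``locked phase'' argument. Once all fault-free nodes enter some phase $p'$ with an identical estimate $w$, only Byzantine nodes (at most $f$ of them) can mac-broadcast $EST$ messages for the other binary value, which therefore never reach the $2f+1$ threshold needed for inclusion in $estValues_i[p']$. Consequently $values_i^r[p']=\{w\}$ at every fault-free node, and with probability $1/2$ the coin returns $w$ and every fault-free node outputs at Line~12 of phase $p'$ (nodes that have already output in earlier phases continue executing the loop, so they keep participating). Combining the three steps, in each pair of consecutive phases the probability that all fault-free nodes have output is at least $1/4$, so the probability of non-termination tends to $0$ and every fault-free node outputs with probability~$1$.

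The main obstacle, to my mind, is verifying the liveness step cleanly: I must be sure that the $AUX$ set $U_i[p]$ grows large enough for some choice of $X,Y$ to satisfy simultaneously $|X|\geq 2f+1$, $X\subseteq Y$, and the value-consistency requirements of Condition WAIT, even though different fault-free nodes may broadcast $AUX$ for different values and Byzantine nodes may equivocate. The pigeonhole-on-binary-values argument makes this work, but the bookkeeping between $estValues$, $AUX$, $COMPLETE$, and $U_i[p]$ has to be done carefully to make sure that the $X$ chosen at $i$ exists simultaneously with a valid $Y$.
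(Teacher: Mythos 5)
Your proposal follows essentially the same route as the paper's proof: deterministic liveness of the two wait statements, then probabilistic convergence to a common estimate via Lemma~\ref{lemma:v=u} and the common coin, then persistence of that estimate (the paper's Lemma~\ref{lemma:v}) and a geometric-decay termination bound. One small imprecision, which you yourself flag: your set $X$ is defined as $2f+1$ fault-free nodes that AUX-broadcast the value $v^\star$ held by $2f+1$ fault-free nodes at the start of the phase, but a fault-free node with estimate $v^\star$ may pass Line~3 while $estValues$ contains only the \emph{other} value and hence never AUX-broadcast $v^\star$; the correct argument applies the pigeonhole to the AUX broadcasts themselves (each of the $\geq 4f+1$ fault-free nodes AUX-broadcasts at least one binary value, so some value is AUX-broadcast by $\geq 2f+1$ of them), which is at the same level of rigor as the paper's own case analysis for Condition WAIT~(3).
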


\begin{proof}
    First, we prove that no fault-free node remains at a wait statement forever during any phase $p$ -- neither at the wait statement at Line 4, not the wait statement at Line 9.

    To show that no fault-free node is blocked forever at Line 4, we demonstrate that the predicate of Line 4 will eventually become true at every fault-free node $i$. All nodes mac-broadcast their estimate value $v$ at Line 3. By the construction of the background handler for $EST$ messages, all nodes which receive the mac-broadcasts with estimate value $v$ will propagate this message on Line 22. There are at least $4f +1$ fault-free nodes, so $f +1$ of at least one value $v \in \{0, 1\}$ must be received by at least $f + 1$ distinct nodes. Eventually, this value $v$ will be added to $estValues_i$ on Line 25 because all nodes must eventually receive and broadcast $v$ by the guarantees of the abstract MAC layer and Line 22. Therefore, the set $estValues_i$ must eventually be non-empty for all nodes in phase $p$, satisfying the predicate on Line 4.

    To show that no fault-free node is blocked forever at Line 9, we demonstrate that the predicate of Line 9 will eventually become true at every fault-free node $i$, meaning that the Condition WAIT is satisfied. 
    \begin{itemize}
        \item Condition WAIT (1, 2) are eventually satisfied because every node will progress to Line 8, and by the guarantees of the abstract MAC layer, the mac-broadcast on Line 8 will eventually return an ACK at every fault-free node $i$;  thus, $X$ will eventually have $2f + 1$ elements.

        \item Condition WAIT (3) must hold true by the nature of the mac-broadcast on Line 3. All nodes which send $(COMPLETE, p)$ messages must have received their ACK from the mac-broadcast on Line 3. All nodes $x \in X$ must have the same value $v$. This is because there are at least $5f + 1$ nodes within the system. Each node which passes Line 4 will broadcast either one ($v$) or two $(v$ and $v')$ $AUX$ messages on on Line 6. In order for a value to be added to $estValues_i[p]$, an $EST$ message containing this value must have been sent from at least $f + 1$ distinct nodes. 
        There are two cases:
        \begin{itemize}
            \item [(1)] Let time $T_{(3)}$ be the time that $i$ receives its $2f + 1$-st $(COMPLETE, p)$ message corresponding to an $(AUX, v, p)$ message sent by a node $x \in X$. If $estValues_i[p]$ = \{v\}, then there must have been fewer than $f + 1$ nodes who mac-broadcast $v'$ on Line 3 by time $T_{(3)}$. 
            It follows that either (a) there must have been at least $4f + 1$ nodes which mac-broadcast value $v$ at Line 3 by time $T_{(3)}$, hence $X$ has at least $4f + 1$ nodes with value $v$; or (b) there were at least $2f + 1$ nodes which mac-broadcast value $v$ at Line 3 by time $T_{(3)}$, and any other nodes which broadcast value $v'$ will do so only after time $T_{(3)}$.

            \item[(2)] If $estValues_i[p] = \{v, v'\}$, then at time $T_{(3)}$, there must have been at least $f + 1$ nodes which mac-broadcast $v$ at Line $3$, and at least $f + 1$ nodes which mac-broadcast $v'$ at Line $3$. There are $5f + 1 - 2(f + 1) = 3f - 1$ other nodes within the system, so it is possible for $2f + 1$ nodes to broadcast $v$, and another $2f + 1$ nodes to broadcast $v'$. Then, $estValues_i[p] = \{v, v'\}$.
        \end{itemize}

        \item Condition WAIT (4, 5) are eventually satisfied because at most $f$ nodes are Byzantine, so if node $i$ has seen messages from $|U_i|$ other nodes at Line 6, it is guaranteed that node $i$ can receive at least $|U_i| -f$ values which are added to set $Y$.

        \item Condition WAIT (6) is given by construction of $X$ and $Y$.

        \item Condition WAIT (7, 8) must be true by construction of $value_i[p]$ since all values received from $Y$'s $(AUX, *, p)$ messages must have been elements of $estValues[p]$.
    \end{itemize}
    Therefore, all predicates of the Condition WAIT must be satisfied for all nodes in phase $p$, satisfying the predicate on Line 9.

    \begin{lemma}
    \label{claim:probability1}
        With probability $1$, there is a phase $p$ in which all fault-free nodes have the same estimate value.
    \end{lemma}

    \begin{proof}
        Fix a phase $r$. There are three possible scenarios:
        \begin{itemize}
            \item[(1)] If the predicate on Line 11 evaluates to true at all fault-free nodes in phase $r$, then there are two cases. (i) If the predicate of Line 12 evaluates to true, all fault-free nodes execute Line 13, they will have output $v$, and their estimate value remains equal to $v$ which is also the value of the common coin in phase $r$; (ii) Otherwise, all fault-free nodes will continue to execute Line 15 and will set their estimate value to equal the value of the common coin in phase $r$. Note that by the assumption of common coin, all fault-free nodes will see the same value $c$; hence, they will take the same action in this case.

            \item[(2)] If all fault-free nodes execute Line 17, then they will all take the value of the common coin in phase $r$ as their estimate value for phase $r + 1$, and the claim directly follows.
            
            \item[(3)] Let behavior $output$ be the scenario at node $i$ if the predicate of Lines 11 and 12 both evaluate to true and node $i$ outputs value $v = c$. Let behavior $continue$ be the scenario at node $i$ if the predicate of Line 11 evaluates to true, but the predicate of Line 12 evaluates to false, that is, $v \neq c$. 
            
            By Lemma \ref{lemma:v=u}, $values_i[r] = values_j[r] = \{v\}$. for all nodes at which the predicate of Line 11 evaluates to true. Therefore, either all nodes at which the predicate of Line 11 evaluates to true will $output$, or all will $continue$. 
            Meanwhile, all other nodes for which the predicate of Line 11 evaluates to false will execute Line 17 and take the value of the common coin. The value of the common coin at any phase $r$ is independent from its value in any other phase by the properties of the common coin. 
            
            Therefore, $v = c$ (Line 12 will evaluate to true, and nodes will $output$ $v$) with probability $p = \frac{1}{2}$. Similarly, $v \neq c$ and nodes which pass Line 11 will $continue$ with probability $p = \frac{1}{2}$. Let $P(r)$ be the probability that there is a phase $r'$ where $r' < r$ such that the value $v$ in phase $r'$ is equal to the value of $c$ in round $r'$. Then, $P(r) = p + (1 - p)p + \cdots + (1 - p)^{r-1}p = 1 - (1 - p )^{r}$. So, $\lim_{r\rightarrow \infty}P(r) = 1$, proving Claim \ref{claim:probability1}.

        \end{itemize}
    \end{proof}

    \begin{lemma}
\label{lemma:v}
    Fix a phase $p$ at the beginning of which all fault-free nodes have the same estimate value $v$.  
    These nodes will remain with value $v$ forever. That is, $v_i = v$ for all fault-free node $i$ after phase $p$.
\end{lemma}

\begin{proof}
    If all fault-free nodes have the same estimate value $v$ at the beginning of phase $p$, they all mac-broadcast$(EST, v_i, p_i)$ at Line 2. Then, each fault-free node will have $v \in estValues[p_i]$, where $v$ is the only value in $estValues[p_i]$. There are at least $f + 1$ fault-free nodes, so the predicate of Line 20 will eventually evaluate to true. Every fault-free node which passes Line 20 will mac-broadcast an $(EST, v, p)$ message. We know by the guarantees of the mac-broadcast, that all the other fault-free nodes will eventually receive and broadcast any value which is received by $f+1$ fault-free nodes. There are at least $4f + 1$ fault-free nodes within the system, so eventually they will broadcast $(EST, v, p)$ message, satisfying the predicate to Line 23, and $v$ will be added to $estValues_i[p]$.  Therefore, the estimate value $v_i$ is set to $v$ and must remain $v$ in phase $p + 1$ and all subsequent phases.
\end{proof}

    By Lemma \ref{claim:probability1} and Lemma \ref{lemma:v}, it follows that the estimate value $v_i$ of all fault-free nodes $i$ will remain $v_i = v$. Let phase $p$ be the phase in which all fault-free nodes have the same estimate value $v$. Hence, the predicate of Line 11 will evaluate to true at every phase after phase $p$. Following from the common coin, we know that with probability $1$, there is eventually a phase in which $coinflip()$ outputs $v$, so $c = v$. Hence, the predicates of Lines 11 and 12 will evaluate to true, and all fault-free nodes will output $v$ with probability 1.

\end{proof}


\section{Impossibility}
\label{app:impossible}

\subsection{Proof that without the knowledge of $f,$ it is impossible to solve consensus}

\begin{theorem}
    Without the knowledge of $f,$ it is impossible to solve Byzantine approximate consensus.   
\end{theorem}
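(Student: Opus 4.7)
The plan is to proceed by an indistinguishability argument. I would assume for contradiction that there is a deterministic algorithm $A$ that solves Byzantine approximate consensus in the abstract MAC layer for all $f$ up to some $F\geq 1$ without the nodes ever knowing $f$ or $n$, and then exhibit a single execution in which $A$ is forced to violate either validity or $\epsilon$-agreement for any $\epsilon<1$.

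First I would fix a baseline execution $\mathcal{E}_1$ with $n=2$ and $f=0$: node $1$ starts with input $0$, node $2$ starts with input $1$, and both are fault-free. Since $A$ must terminate under $\mathcal{E}_1$, denote the outputs by $v_1,v_2\in[0,1]$. Next I would construct an execution $\mathcal{E}_2$ with $f=1$ and $n\geq 5f+2$ (say $n=7$): node $1$ is fault-free with input $0$, every other fault-free node also has input $0$, and one distinguished node plays the role of a Byzantine impersonator of node $2$ from $\mathcal{E}_1$. The adversarial scheduler (i) has the impersonator replay, at the same local times, exactly the messages node $2$ produced in $\mathcal{E}_1$; (ii) delivers every mac-broadcast issued by node $1$ to all fault-free nodes $3,\dots,n$ simultaneously with the moment node $2$ received it in $\mathcal{E}_1$, so that node $1$'s ACK is returned at the same local instant; and (iii) delays every message from the remaining fault-free nodes $3,\dots,n$ to node $1$ until after node $1$ has terminated in the simulated copy of $\mathcal{E}_1$. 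Because the abstract MAC layer only reveals the authenticated sender of each received message, and because the ACK carries no information about the number of recipients, node $1$'s local history is bit-for-bit identical in $\mathcal{E}_1$ and $\mathcal{E}_2$ up to termination; determinism of $A$ then forces node $1$ to output $v_1$ in both executions. Validity applied to $\mathcal{E}_2$, whose only fault-free inputs are $0$, pins $v_1=0$.

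By a mirror-image construction $\mathcal{E}_2'$ in which all fault-free inputs equal $1$ and a Byzantine impersonator of node $1$ replays the transcript of node $1$ from $\mathcal{E}_1$, the same indistinguishability argument applied at node $2$ pins $v_2=1$. Returning to the original execution $\mathcal{E}_1$ we then have $v_1=0$ and $v_2=1$, so $|v_1-v_2|=1$, contradicting $\epsilon$-agreement for any $\epsilon<1$ and completing the impossibility.

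The step I expect to be the main obstacle is making the indistinguishability between $\mathcal{E}_1$ and $\mathcal{E}_2$ (and between $\mathcal{E}_1$ and $\mathcal{E}_2'$) airtight under the abstract MAC layer specification. In particular, one has to justify that the adversarial scheduler can simultaneously (a) deliver node $1$'s mac-broadcast to the extra fault-free nodes exactly when it delivers to the impersonator, so that the ACK timing is preserved, and (b) arbitrarily postpone the deliveries from those extra fault-free nodes to node $1$, without violating eventual delivery. Both hinge on the abstract MAC layer giving the scheduler full freedom over timing subject only to eventual delivery and authenticated sender identity, together with the key property emphasized throughout the paper that the ACK reveals no recipient-count information. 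Once this scheduling is pinned down, the remaining combination of determinism, validity in $\mathcal{E}_2$ and $\mathcal{E}_2'$, and $\epsilon$-agreement in $\mathcal{E}_1$ closes the argument.
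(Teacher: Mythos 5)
Your proposal is correct and follows essentially the same route as the paper's proof: a two-node fault-free baseline, a larger execution in which a Byzantine impersonator replays one node's transcript while the scheduler delays all other fault-free nodes' messages, and a contradiction with validity. The only (minor, presentational) difference is that you close the argument by pinning $v_1=0$ and $v_2=1$ via validity in the two larger executions and then contradicting $\epsilon$-agreement in the baseline, which is a slightly tighter way of reaching the same contradiction the paper states directly as a validity violation.
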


\begin{proof}
    Assume that there exists a uniform algorithm $A$ that solves consensus for a certain $n$ and $f$, but the algorithm itself does not know these values for $n$ and $f$. An algorithm is uniform if all nodes are using the same algorithm. 
    
    To satisfy $\epsilon$-agreement, all $n-f$ fault-free nodes must have an output such that no two fault-free noes $i$ and $j$ have states $|v_i - v_j| > \epsilon$.
    
    Consider Scenario 1 and Scenario 2 below. In both scenarios, the nodes do not know the exact value of $n$, nor $f$:
    
    1. Scenario 1: Suppose $n = 2$, $f = 0$. Assume two nodes, $a$ and $b$, are fault-free nodes. A node can only communicate with one other node to make a decision, since $n=2$. 
    
    Suppose node $a$ has input $0$ and node $b$ has input $1$.
    
    To satisfy validity, there must be an output which can be any value in the range $[0, 1]$ at $a$ and $b$ such that $\epsilon$-agreement is satisfied. In fact, $a$ can have one value and $b$ can have one value as long as $|v_a - v_b| \leq \epsilon$.
    
    2. Scenario 2: Assume there are any number of nodes within the system satisfying $n \geq 2f + 1$. In this scenario, a node, say node $a$, can only communicate with one other node to make a decision. This is because to node $a$, it is indistinguishable whether $a$ is in Scenario 1 or Scenario 2 since there is no way for node $a$ to know the values of $n$ or $f$. 
    
    We can impose a delay $D$ on all messages other than those from node $b$. 
    
    Similar to the setup in \cite{Sapta_churn}, assume that $D$ is an unknown upper bound, on the delay of any message sent between between fault-free nodes. Let $D = t' - t$ where $D > 0$ be the maximum delay on a message sent at time $t$ and received at time $t'$.
    
    For two fault-free nodes $a$ and $b$, if a message sent from node $a$ at time $t$ and node $b$ is active throughout $[t, t + D]$, then we know that node $b$ will receive the message from $a$, and that the delay of every received message is in the range $(0, D]$.
    
    Assume $(0, D]$ is the bound on the delay of the message sent from node $b$. In the time interval $(0, D]$, assume the only message which node $a$ receives is a message from node $b$. During this time interval, node $a$ cannot distinguish between Scenarios 1 and 2.
    
    Run $A$ in Scenario 2, within the time interval $(0, D]$, assuming the Byzantine node $b$ behaves in the exact same manner as it did in Scenario 1 to node $a$. Once again, from the perspective of node $a$, Scenarios 1 and 2 remain indistinguishable and there is no way for node $a$ to know whether the behavior from $b$ is Byzantine or not. Node $a$ can output a value in $(0, 1]$, but this is a contradiction because any value in $(0, 1]$ is not within the range of fault-free inputs, $[0]$, so validity is violated.    
    
    Next, run $A$ in Scenario 2, this time assuming the Byzantine node $a$ behaves in the exact same manner as it did in Scenario 1 to node $b$. Once again, from the perspective of node $b$, Scenarios 1 and 2 remain indistinguishable and there is no way for node $b$ to know whether the behavior from $a$ is Byzantine or not. Node $b$ can output a value in $[0, 1)$, but this is a contradiction because any value in $[0, 1)$ is not within the range of fault-free inputs, $[1]$, so validity is violated. 
\end{proof}

\end{document}